\DeclareMathOperator*{\Hg}{{\bf H}} 
\newcommand\independent{\protect\mathpalette{\protect\independenT}{\perp}}
\def\independenT#1#2{\mathrel{\rlap{$#1#2$}\mkern2mu{#1#2}}}
\newtheorem{theo}{Theorem}
\newtheorem{defi}{Definition}
\newtheorem{prop}{Proposition}
\newtheorem{remark}{Remark}
\newtheorem{con}{Conjecture}
\newtheorem{ex}{Example}
\newcommand{\derya}[1]{\textit{\textcolor{blue}{ #1}}}
\newcommand{\df}{d_{f_c}(m_v^c)}
\newcommand{\dfLvc}{d_{f_c}(L_v^c)}
\newcommand{\x}{{\bf x}}
\newcommand{\X}{{\bf X}}
\newcommand{\Hgf}{\Hg(f_c(\X))}
\newcommand{\ccomp}{W_{v,comp}^c}
\newcommand{\ccomm}{W_{v,comm}^c}
\newcommand{\ccompt}{W_{v,comp}^{c\,(t)}}
\newcommand{\ccommt}{W_{v,comm}^{c\,(t)}}
\newcommand{\gammaf}{{\bm \gamma}_f({\bm \lambda}^c)}
\newcommand{\Prouc}{P^{rou}(c)}
\newcommand{\Prouctilde}{\tilde{P}^{rou}(c)}
\newcommand{\s}{\mbox{sec}}
\def\compactify{\itemsep=0pt \topsep=0pt \partopsep=0pt \parsep=0pt}
\let\latexusecounter=\usecounter
\newenvironment{CompactEnumerate}
  {\def\usecounter{\compactify\latexusecounter}
   \begin{enumerate}}
  {\end{enumerate}\let\usecounter=\latexusecounter}
\begin{document}
\title{Function Load Balancing over Networks} 
\author{Derya~Malak~and~Muriel~M\'{e}dard
\thanks{Malak is with the ECSE Department, at RPI, Troy, NY 12180 USA (email: malakd@rpi.edu). M\'{e}dard is with RLE, at MIT, Cambridge, MA 02139 USA (email:  medard@mit.edu). \hfill Manuscript last revised: {\today}.}
\thanks{A preliminary version of the paper appeared in Proc., IEEE INFOCOM 2020 \cite{malak2020distribute}.}}
\maketitle

\begin{abstract}
Using networks as a means of computing can reduce the communication flow or the total number of bits transmitted over the networks. In this paper, we propose to distribute the computation load in stationary networks, and formulate a flow-based delay minimization problem that jointly captures the aspects of communications and computation. We exploit the distributed compression scheme of Slepian-Wolf that is applicable under any protocol information where nodes can do compression independently using different codebooks. We introduce the notion of entropic surjectivity as a measure to determine how sparse the function is and to understand the limits of functional compression for computation. We leverage Little'€™s Law for stationary systems to provide a connection between surjectivity and the computation processing factor that reflects the proportion of flow that requires communications. 
This connection gives us an understanding of how much a node (in isolation) should compute to communicate the desired function within the network without putting any assumptions on the topology. Our results suggest that to effectively compute different function classes that have different entropic surjectivities, the networks can be re-structured with the transition probabilities being tailored for functions, i.e., task-based link reservations, which can enable mixing versus separately processing of a diverse function classes.  
They also imply that most of the available resources are reserved for computing low complexity functions versus fewer resources for processing of high complexity ones.
We numerically evaluate our technique for search, MapReduce, and classification functions, and infer how sensitive the processing factor for each computation task to the entropic surjectivity is.
\end{abstract}

\maketitle

\section{Introduction}\label{intro}

Challenges in cloud computing include effectively distributing computation to handle the large volume of data with growing computational demand, and the limited resources in the air interface. Furthermore, various tasks such as computation, storage, communications are inseparable. In network computation is required for reasons of dimensioning, scaling and security, where data is geographically dispersed. 
We need to exploit the sparsity of data within and across sources, as well as the additional sparsity inherent to labeling, to provide approximately minimal representations for labeling.

An equivalent notion to that sparsity is that of data redundancy. Data is redundant in the sense that there exists, a possibly latent and ill understood, sparse representation of it that is parsimonious and minimal, 
and that allows for data reconstruction, possibly in an approximate manner. Redundancy can occur in a single source of data or across multiple sources.

Providing such sparse representation for the reconstruction of data is the topic of compression, or source coding. The Shannon entropy rate of data provides, for a single source, a measure of the minimal representation, in terms of bits per second, required to represent data. This representation is truly minimal, in the sense that it is achievable with arbitrarily small error or distortion, but arbitrarily good fidelity of reconstruction is provably impossible at lower rates. 

\subsection{Motivation}\label{motivation}

As computation becomes increasingly reliant on numerous, possibly geo-dispersed, sources of data, making use of redundancy across multiple sources without the need for onerous coordination across sources becomes increasingly important. The fact that a minimal representation of data can occur across sources without coordination is the topic of distributed compression. The core result is that of Slepian and Wolf \cite{SlepWolf1973}, who showed that distributed compression without coordination 
can be as efficient, in terms of asymptotic minimality of representation. 

Techniques for achieving compression have traditionally relied on coding techniques. Coding, however, suffers from a considerable cost, as it imputes, beyond sampling and quantization, computation and processing at the source before transmission, then computation and processing at the destination after reception of the transmission. A secondary consideration is that coding techniques, to be efficiently reconstructed at the destination, generally require detailed information about the probabilistic structure of the data being represented. For distributed compression, the difficulty of reconstruction rendered the results in \cite{SlepWolf1973} impractical until the 2000s, when channel coding techniques were adapted.


In the case of learning on data, however, it is not the data itself but rather a labeling of it that we seek. That labeling can be viewed as being a function of the original data. The reconstruction of data is in effect a degenerate case where the function is identity. Labeling is generally a highly surjective function and thus induces sparsity, or redundancy, in its output values beyond the sparsity that may be present in the 
data. 
 
The use of the redundancy in both functions and data to provide sparse representations of functions outputs is the topic of the rather nascent field of functional compression. 
A centralized communication scheme requires all data to be transmitted to some central unit in order to perform certain computations. However, in many cases such computations can be performed in a distributed manner at different nodes in the network avoiding transmission of unnecessary information in the network. Hence, intermediate computations can significantly reduce the resource usage, and this can help improve the trade-off between communications and computation.

\subsection{Related Work}
\label{relatedwork}

{\bf \em Distributed compression.} 
Compressed sensing and information theoretic limits of representation provide a solid basis for function computation in distributed environments. Problem of distributed compression has been considered from different perspectives. For source compression, distributed source coding using syndromes (DISCUS) have been proposed \cite{PradRam2003}, and source-splitting techniques have been discussed \cite{ColLeeMedEff2006}. For data compression, there exist some information theoretic limits, such as side information problem \cite{WynZiv1976}, 
Slepian-Wolf coding or compression for multi-depth trees \cite{SlepWolf1973}, and  general networks via multicast and random linear network coding \cite{HoMedKoeKarEffShiLeo2006}. 

{\bf \em Functional compression.} 
In \cite{Korner1973} K\"orner introduced graph entropy 
to characterize rate bounds in functional compression \cite{AlonOrlit1996}. For a general function 
where one source is local and another collocated with the destination, in \cite{OrlRoc2001} authors provided a single-letter characterization of the rate-region. In \cite{DosShaMedEff2010},  
\cite{FeiMed2014} 
authors investigated graph coloring approaches 
for tree networks. In \cite{FES04} authors computed a rate-distortion region for  functional compression with side information, and in \cite{delgosha2018distributed} authors devised polynomial time 
compression techniques for sparse graphs. 
A line of work considered in network computation for 
specific functions. In \cite{KowKum2010} authors studied computing symmetric Boolean functions in tree networks. In \cite{Gal88}, authors analyzed the asymptotic 
rate in noisy broadcast networks, and in \cite{KM08} for random geometric graphs. 
Function computation was studied using multi-commodity flow techniques in \cite{ShaDeyMan2013}. 
However, there do not exist tractable approaches to perform functional compression that approximate the information theoretic limits unlike the case for compression, where coding and compressed sensing techniques exist. 

{\bf \em Computing capacity in networks.} 
Computing capacity of a network code is the maximum number of times the target function can be computed per use of the network 
\cite{HuanTanYangGua2018}. This capacity for special cases such as trees, identity function 
\cite{LiYeuCai2003}, linear network codes to achieve the multicast capacity have been studied \cite{LiYeuCai2003}, \cite{KoeMed2003}. For scalar linear 
functions, the computing capacity can be fully characterized by min cut \cite{KoeEffHoMed2004}. For vector linear functions over a finite field, necessary and sufficient conditions have been obtained so that linear network codes are sufficient to calculate the function \cite{AppusFran2014}. For general functions and network topologies, upper 
bounds on the computing capacity based on cut sets have been studied \cite{KowKum2010}. 
In \cite{HuanTanYangGua2018}, authors generalize the equivalence relation for the computing capacity. 
However, in these works, characterization based on the equivalence relation associated with the target function is only valid for special topologies, e.g., the multi-edge tree. For more general networks, this equivalence relation is not sufficient to explore general computation problems. 

{\bf \em Coded computing and cost-performance tradeoffs.} 
Coding for computing was widely studied in the context of compressed coded computing of single-stage functions in networks and multi-stage computations 
\cite{LiAliYuAves2018} which 
focused on linear reduce functions. 
Nodes with heterogeneous processing capabilities were studied in \cite{KiaWanAves2017}. 
Coded computing aims to tradeoff communications 
by injecting computations. While fully distributed algorithms cause a high communication load, fully centralized systems suffer from high computation load. With distributed computing at intermediate nodes by exploiting multicast opportunities, the communication load can be significantly reduced, and made inversely proportional to the computation load \cite{LiAliYuAves2018}. The rate-memory tradeoff for function computation was studied in \cite{YuAliAves2018}. Other schemes to improve the recovery threshold are Lagrange codes \cite{YuRavSoAve2018}, 
and polynomial codes for distributed matrix multiplication 
\cite{YuAliAve2017}.   

{\bf \em Functions with special structures.} 
In functional compression, functions themselves can also be exploited. There exist functions with special structures, such as sparsity promoting functions \cite{SheSutTri2018}, symmetric functions, 
type sensitive and 
threshold functions \cite{GK05}. One can also exploit a function's surjectivity. 
There are different notions on how to measure surjectivity, such as deficiency \cite{FuaFenWanCar2018}, 
ambiguity 
\cite{PanSakSteWan2011}, and equivalence relationships among function families \cite{gorodilova2019differential}. 

{\bf \em Systems perspective.} Data parallelism 
and model parallelism 
were studied to accelerate the training of convolutional neural networks and multicasting was exploited to reduce communications cost \cite{krizhevsky2014one}. 
Layer-wise parallelism instead of a single parallelism strategy was explored \cite{jia2018exploring}. 
Various scheduling approaches were explored to improve communication efficiency, see e.g., low latency scheduling \cite{ousterhout2013sparrow}, and task graphs for modeling task dependency \cite{jia2018beyond}. 
Another line of work focused on effective task scheduling, e.g., scheduling for cloud computing \cite{boutin2014apollo}, graph partitioning \cite{wang2019supporting} or reinforcement learning to learn efficient assignments on multiple GPUs \cite{mirhoseini2017device}, and placing computation onto a mixture of GPU/CPU devices to reduce the training time 
by modeling device placement as a Markov decision process \cite{gao2018spotlight}. However, \cite{wang2019supporting,mirhoseini2017device,gao2018spotlight} did not explicitly account or optimize communications with respect to the topology. There are also approaches that minimize the node degree while maximizing the number of parallel information exchanges between nodes, such as \cite{wang2019matcha} that considers communications, but in a proportional manner, 
or prioritization of parameter transfers to improve training time, such as \cite{hashemi2018tictac}. 

\subsection{Functional Compression in Networks}
\label{compresstocompute}
In this section, we introduce some concepts from information theory which characterize the minimum communication (in terms of rate) necessary to reliably evaluate a function at a destination. This problem is referred to as distributed functional compression, and has been studied under various forms since the pioneering work of Slepian and Wolf \cite{SlepWolf1973}. 

\paragraph{Slepian-Wolf Compression}
This is the distributed lossless compression setting where the function $f(X_1,\ldots,X_n)$ is the identity function. In the case of two random variables $X_1$ and $X_2$ that are jointly distributed according to $P_{X_1,X_2}$, the Slepian-Wolf theorem gives a theoretical bound for the lossless coding rate for distributed coding of the two statistically dependent i.i.d. finite alphabet source sequences $X_1$ and $X_2$ as shown below \cite{SlepWolf1973}:
\begin{align}
\label{rateregionSW}
R_{X_1} \geq H(X_1|X_2),\quad
R_{X_2} \geq H(X_2|X_1),\nonumber\\
R_{X_1}+R_{X_2} \geq H(X_1,X_2),
\end{align}
implying that $X_1$ can be asymptotically compressed up to the rate $H(X_1|X_2)$ when $X_2$ is available at the receiver \cite{SlepWolf1973}. This theorem states that to jointly recover sources $(X_1,X_2)$ at a receiver with arbitrarily small error probability for long sequences, it is both necessary and sufficient to separately encode $X_1$ and $X_2$ at rates $(R_{X_1}, R_{X_2})$ satisfying (\ref{rateregionSW}). The codebook design is done in a distributed way, i.e., no communication 
is necessary between the encoders.

\begin{figure}[t!]
\centering
\includegraphics[width=0.4\textwidth]{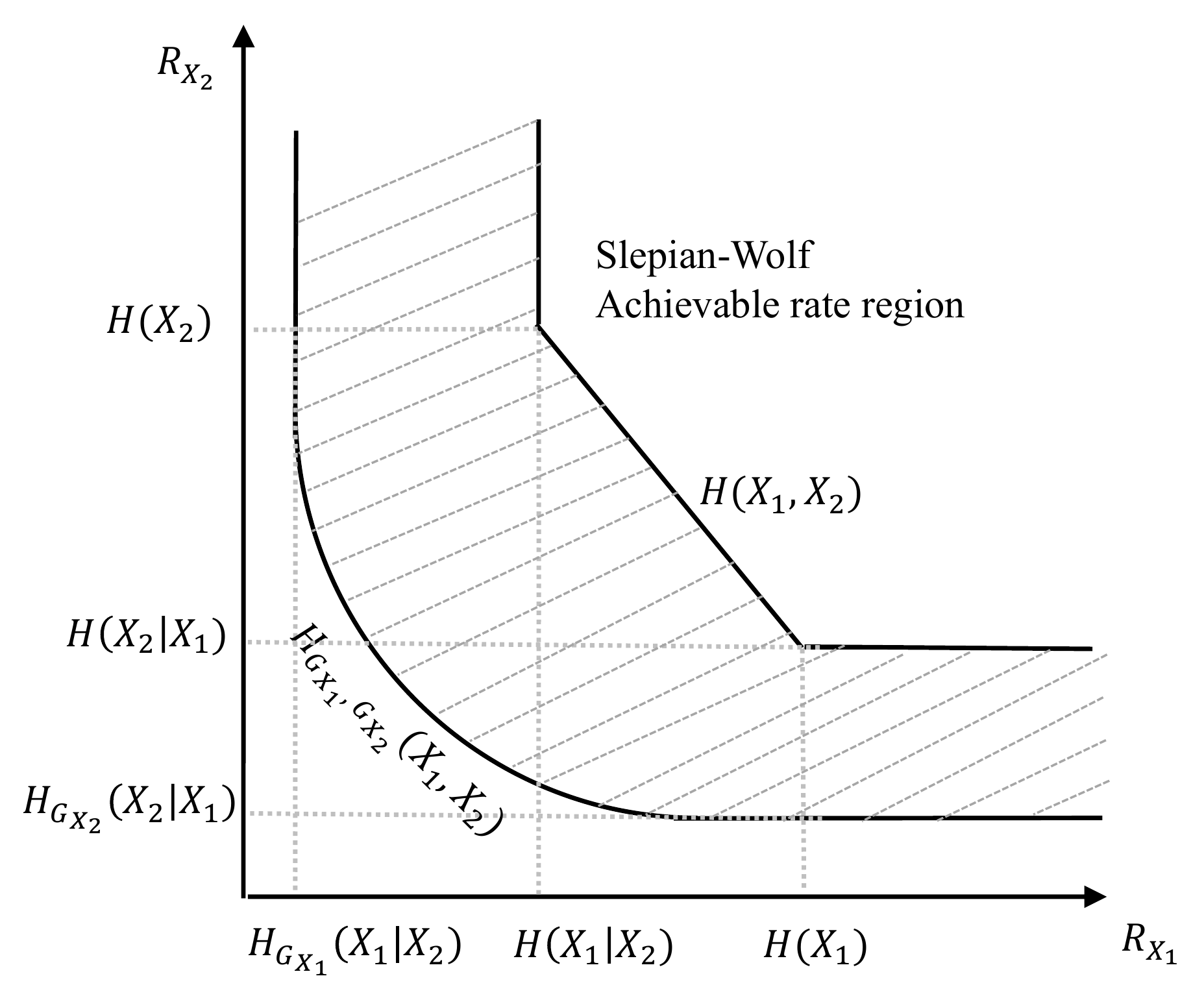}
\caption{Rate region for the zero distortion distributed functional compression problem \cite{DosShaMedEff2010}. 
The joint entropy $H(X_1,X_2)$ is a lower bound to the Slepian-Wolf 
rate region, and the 
joint graph entropy  $H_{G_{X_1},G_{X_2}}(X_1,X_2)$ is characterized by function's surjectivity. 
}
\label{SWandGraphEntropy_surjectivity_details}
\end{figure} 

One challenge in distributed functional compression is the function $f$ on the data $X_i$ itself. The rate region of this problem depends on the function and the mappings from the source variables $X_i$ to the destinations which challenge the codebook design due to the correlations among $X_i$ and $f$. 
To that end, the characteristic graph and its coloring play a critical role in the study of the fundamental limits of functional compression. Each vertex of the characteristic graph represents a possible different sample value, and two vertices are connected if they should be distinguished. More precisely, for a collection of sources $X=\{X_1,\ldots,X_n\}$ each taking values in the same alphabet $\mathcal{X}$, and a function $f$, to construct  the characteristic graph of $f$ on $X_1$, i.e., $G_{X_1}$, we draw an edge between vertices $u$ and $v \in \mathcal{X}$, if $f(u, x_2,\ldots,x_n) \neq f(v, x_2,\ldots, x_n)$ for any $x_2,\ldots,x_n$ whose joint instance has non-zero measure. Surjectivity of $f$ determines the connectivity and hence coloring of the graph. 
The entropy rate of the coloring of $G_X$ for the function $f$ on $X$  
characterizes the minimal representation needed to reconstruct with fidelity the desired function of the data $f(X)$ \cite{Korner1973}.  The degenerate case of the identity function corresponds to having a complete characteristic graph.

\paragraph{Functional Compression}
Given a graph $G_{X_1} = (V_{X_1} , E_{X_1} )$ of $X_1$ with respect to $X_2$, $p(x_1, x_2)$, and function $f(X_1, X_2)$
the graph entropy is expressed as \cite{Korner1973,AlonOrlit1996}
\begin{align}
H_{G_{X_1}}(X_1)= \min\limits_{X_1\in W_1\in S(G_{X_1})} I(X_1; W_1), 
\end{align}
where $S(G_{X_1})$ is the set of all maximal independent sets (MISs) of $G_{X_1}$, where a MIS is not a subset of any other independent set, i.e., a set of vertices in which no two vertices are adjacent \cite{moon1965cliques}, and $X_1 \in W_1 \in S(G_{X_1})$ means that the minimization is over all distributions $p(w_1, x_1)$ such that $p(w_1,x_1) > 0$ implies $x_1 \in w_1$, where $w_1$ is a MIS of $G_{x_1}$.

In \cite[Theorem 41]{FeiMed2014}, authors have characterized the rate region 
for a distributed functional compression problem with two transmitters and a receiver 
by the following conditions:
\begin{align}
\label{rateregiongraph}
R_{11} \geq H_{G_{X_1}}(X_1|X_2),\quad
R_{12} \geq H_{G_{X_2}}(X_2|X_1),\nonumber\\
R_{11}+R_{12} \geq H_{G_{X_1},G_{X_2}}(X_1,X_2),
\end{align}
where $H_{G_{X_1},G_{X_2}}(X_1,X_2)$ is the joint graph entropy. 
To summarize, the role of functional compression is to reduce the amount of rate needed to recover the function $f$ on data $X$, 
and the savings increase along 
$H(X) 
\to H_{G_X}(X)$.

In general, finding minimum entropy colorings of characteristic graphs is NP-hard, and 
the optimal rate region of functional compression remains an open problem \cite{GamKim2011}. However, in some instances, it is possible to efficiently compute these colorings \cite{DosShaMedEff2010}, \cite{FeiMed2014}. 
In \cite{FeiMed2014}, the sources compute colorings of high probability subgraphs of their $G_X$ and perform Slepian-Wolf compression on these colorings and send them. Intermediate nodes compute the colorings for their parents', and by using a look-up table to compute the corresponding functions, they find corresponding source values of received colorings.

In Fig. \ref{SWandGraphEntropy_surjectivity_details}, we illustrate the Slepian-Wolf compression rate region inner bound (IB) in (\ref{rateregionSW}) versus the convex outer bound (OB) in (\ref{rateregiongraph}) determined by the joint graph entropy. 
The region between two bounds 
determines the limits of the functional compression 
and indicates that there could be potentially a lot of benefit in exploiting the compressibility, e.g., via using the deficiency metric in \cite{FuaFenWanCar2018}, of the function to reduce communication. The convexity of OB can be used to exploit the tradeoff between communications and computation, which is mainly determined by the network, data, correlations, and functions. 
%
%
While this gives insights on the limits of compression, it is not clear whether operating on the OB jointly optimizes communications and computation. In particular, since the achievable schemes are based on NP-hard concepts, constructing optimal compression codes imposes a significant computational burden on the encoders and decoders. If the cost of computation were insignificant, it would be optimal to operate at OB. 
When this cost is not negligible, to capture the balance between communications and computation, we will follow a flow-based approach to account for functional compression provided by fundamental entropy limits, which we detail in Sect. \ref{networkmodel}.

\subsection{Contributions}	
\label{contributions}
In this paper, we provide a fresh look at the distributed function computation problem in a networked context. To the best of our knowledge, there are no constructive approaches to this problem except for special cases as outlined in this section. 
As a first step to ease this problem, we will provide a utility-based approach for general cost functions. As special cases, we continue with simple examples of point search with running time $O(\log N)$ where $N$ is the input size in bits, 
then MapReduce with $O(N)$, then the binary classification model with $O(\exp(N))$. Our main contribution is to provide the link between the computation problem and Little's Law.

We devise a distributed function computation framework for stationary Jackson network topologies as a simple means of exploiting function's entropic surjectivity -- a notion of sparsity inherent to functions, by employing the concepts of graph entropy, to provide approximately minimal representations for computing. Function outcomes can be viewed as colors on characteristic graph of the function on the data, which is central to functional compression. Our main insight is that, the main characteristics required for operating the distributed computation scheme are those associated with the entropic surjectivity of the functions. 
Jackson networks allow for the treatment of each node in isolation, independent of the topology. Hence, we do not have to restrict ourselves to cascading operations as in \cite{FeiMed2010allerton} due to the restriction of topology to linear operations.
While the performance analysis of the Jackson network is standard by applying Little's Law and Markov routing policy, and the communication time can be devised easily, to the best of our knowledge, the computing perspective was never integrated to Little's Law before.

We frame a delay cost optimization problem for a Jackson network topology for the distributed function computation by jointly considering the computation and communication aspects and using a flow-based technique where we use general cost functions for computation. 
We introduce entropic surjectivity as a measure to determine how surjective a function is. 
The enabler of our approach is the connection between Little's Law and proportion of flow that requires communications 
that is determined by the entropic surjectivity of functions.

The advantages of the proposed approach is as follows. It does not put any assumptions on the network topology and characterizes the functions only via their entropic surjectivity. 
The probabilistic reservation of the bandwidth (i.e., the Markov routing policy) is on the tasks that have different entropies from a functional compression perspective. The distributed data compression scheme of Slepian-Wolf focuses on the source compression and is applicable under any protocol information \cite{cover1975proof}. 
Our approach provides insights into how to distribute computation only via the entropic surjectivity of the functions, and how to allocate the resources among different functions when flows are processed separately or mixed. 
Our results suggest that to effectively compute different function classes, the networks can be restructured with the transition probabilities being tailored for tasks,  i.e., task-based or weighted link reservations, which can enable mixing of a diverse classes. They also imply that most of the available resources are reserved for computing low complexity functions versus fewer resources for complex functions.

The organization for the rest of the paper is as follows. In Sect. \ref{networkmodel}, we detail the computation model, and derive lower bounds on the rate of generated flows (i.e., processing factors) 
by linking the computation problem to Little's Law. In Sect. \ref{costbreakdown}, we provide the delay cost models. In Sect. \ref{flowanalysis} we analyze the flow, derive load thresholds for computing, and devise cost optimization problems. In Sect. \ref{performance}, we present numerical results, and in Sect. \ref{conclusion} discuss possible directions.

\section{A Networked Computation Model}
\label{networkmodel}
In this section, we detail our network model that incorporates communications and computation.  
We consider a general stationary network topology. Sources can be correlated, and computations are allowed at intermediate nodes, and we are interested in computing a set of deterministic functions. While doing so our goal is to effectively distribute computation. To that end, intermediate nodes need to decide whether to compute or relay.

{\bf \em A product-form multi-class network model.} 
We assume a multiple-class open Jackson network of 
$|V|$ nodes with computing capabilities, in which packets can enter the system from an external/virtual source $s$, get processed and leave the system, i.e., are sent to an virtual destination $0$. Let $G=(V,E)$ be a graph modeling the communication network where $V=[v_0=s,v_1,\hdots, v_{|V|-2}, v_{|V|-1}=0]$ represents the collection of all $|V|$ nodes 
and there is a directed link from $v\in V$ to $w\in V$ if $(v, w)\in E$. We also let $V'=V\backslash\{s,0\}$. 
Multi-class networks of queues and their generalizations to quasi-reversible networks, networks in product form, or networks with symmetric queues have been well explored in the literature, see for example \cite{kelly2011reversibility,nelson2013probability}. 
These classes of networks of queues can capture a broad array of queue service disciplines, including FIFO, LIFO, and processor sharing.


A Jackson network exhibits a product-form equilibrium behavior/distribution such that we can consider each node in isolation (i.e., independently) and investigate the steady-state properties of the network. A benefit of this is each node needs to know how much it needs to manage, which is less complicated than when nodes need the topological information to determine how to manage individual computational flows. 
Similar behavior is observed when a network of queues, such that each individual queue in isolation is quasi-reversible\footnote{A queue with stationary distribution $\pi$ is quasi-reversible if its state at time $t$, ${\bf x}(t)$ is independent of (i) the arrival times for each class of packet subsequent to time $t$, and (ii) the departure times for each class of packet prior to time $t$, for all classes of packet \cite[Ch. 10.3.7]{nelson2013probability} \cite[pp. 66-67]{kelly2011reversibility}.}, always has a product from stationary distribution \cite{walrand1983probabilistic}. Different from reversibility, a stronger condition is imposed on arrival rates and a weaker condition is applied on probability fluxes. 

In our network setting each node maintains a computation queue and a communications queue. 


{\bf \em State of communication / computation 
queue.}  We consider a network $\mathcal{N}$ ($\mathcal{M}$) of $v\in V$ quasi-reversible communications (computation) queues. Let $n_v^c$ ($m_v^c$) be the number of packets of class $c\in C$ at node $v$ requiring communications (compute) service, and $n_v=\sum\nolimits_{c\in C}n_v^c$ ( $m_v=\sum\nolimits_{c\in C}m_v^c$) be the total number of packets in the communications (compute) queue of $v$. To model the communication (computation) queue state of $v$, let $\mathbf{n}_v= (c_1, \hdots , c_{n_v})$ (${\bf m}_v= (c_1, \hdots , c_{m_v})$) where $c_i$ is the class of $i=1,\hdots, n_v$ ($i=1,\hdots, m_v$) th packet. 
A class $c$ packet has an arrival that follows a Poisson process with rate $\lambda_v^c$, and requires a compute service. The traffic intensity associated with class $c$ at $v$ is $\sigma_v^c$. Hence, the total intensity at node $v$ equals $\sigma_v=\sum\nolimits_{c\in C}\sigma_v^c$. Utilization is $\sigma_v^c<1$ to ensure stability of computation.
A class $c$ packet has a departure that follows a Poisson process with rate $\gamma_v^c(\lambda_v^c)$ 
and requires a communication service drawn from an exponential distribution with mean $1/\mu_v^c$ independently. Hence, the traffic intensity associated with class $c$ at $v$ is $\rho_v^c=\gamma_v^c(\lambda_v^c)/\mu_v^c$. Hence, the total intensity at $v$ equals $\rho_v=\sum\nolimits_{c\in C}\rho_v^c$. The utilization satisfies $\rho_v^{c}<1$ for all $c$, $v$, which suffices for stability. 
To ensure quasi-reversibility, we assume a M/M/1 network with $\mu_v^c=\mu_v$ for all $c$. Then the steady-state distribution of the queue $v$ has a product form \cite[Ch. 9.9.2]{SrikantYing2014} given by 
$\pi_v(\mathbf{n}_v)=(1-\rho_v)\prod\nolimits_{c\in C} (\rho_v^{c})^{n_v^c}$. 
The steady-state distribution of the global state of $\mathcal{N}$ is $\pi(\mathbf{n})=\prod \nolimits_{v\in V} \pi_v(\mathbf{n}_v)$, where $\mathbf{n}=(\mathbf{n}_1,\,\mathbf{n}_2,\hdots, \mathbf{n}_{|V|})$. 
The global state of $\mathcal{M}$ is ${\bf m} = ({\bf m}_1,\,{\bf m}_2,\,\hdots,{\bf m}_{|V|})$.   The stationary distribution of compute queue $v$ can be analyzed in isolation similarly as in communications queue 
and is $\phi_v({\bf m}_v)=(1-\sigma_v^c)\prod\nolimits_{c\in C} (\sigma_v^{c})^{m_v^c}$. Hence, the stationary distribution of $\mathcal{M}$ is given by $\phi({\bf m})=\prod\nolimits_{v\in V}\phi_v({\bf m}_v)$. 
Global state of the overall network is $[\mathbf{n}; \mathbf{m}] =[\mathbf{n}_v; \mathbf{m}_v]_{v\in V}\in\mathbb{R}^{n+m}$ where $n=\sum\nolimits_{v\in V}{n_v}$, and $m=\sum\nolimits_{v\in V}{m_v}$. 
For simplicity of notation, we let $\rho_v^c={\lambda_v^c}/{\mu_v^c}\in[0,1)$, and $\rho=[\rho_v^c]_{c\in  C,\,v\in V}$, $\lambda=[\lambda_v^c]_{c\in  C,\,v\in V}$, $\mu=[\mu_v^c]_{c\in  C,\,v\in V}$, and $\sigma=[\sigma_v^c]_{c\in  C,\,v\in V}$.

{\bf \em Set of computation flows.}
For a set of functions $f\in\mathcal{F}$, we denote the set of computational flows by $ C = \{c = (f, X) \}$ which represents the class of functions and $X\in\mathcal{X}$ defined on the probability space $(\mathcal{X},\mathcal{P})$ where $\mathcal{X}$ is the set of symbols and $\mathcal{P}$ is the data (or source) distribution. Multiple classes represent different types of functions, computation for each class is done in parallel, and the network can do computations and conversions between different classes (which we detail in Sect. \ref{costbreakdown}). Let $ C=|C|$ be the cardinality of all classes, and any function of the same class $c\in C$ has the same complexity. Hence, we denote a function $f$ by $f_c$ to if it belongs to class $c$. How many classes of functions that the network can handle will be a proxy for the resolution of network's computation capability, and $|C|$ being sufficiently large means that a wide class of functions can be computed/approximated.

{\bf \em Source data in bits.}  
A stochastic process models the source $\X=(X_1,\hdots X_N)$. Hence, the entropy rate or source information rate is given as $H(X)=\lim\limits_{N\to \infty} H(\X)/N$ assuming that the limit exists. We assume that the total arrival rate (flow in bits per second) of $\X$ at node $v$ is $\lambda_v=H(X)$ and the total incoming traffic intensity associated with class $c$ at node $v$ is $\lambda_v^c$, which is always less than the service rate of class $c$ flow at node $v$ given by $\mu_v^c$. Average number of packets at node $v$ due to the processing of class $c$ function is $L_v^c$.

\begin{defi}{\bf Processing factor.} This 
is the amount of computational flow rate generated by a node $v$ as a result of computing $f_c$ and is a monotone increasing function of the incoming flow rate $\lambda_v^c$. It is smaller than $\lambda_v^c$ and denoted by $$\gamma_f(\lambda_v^c).$$ 
\end{defi}

\begin{figure}[t!]
\centering
\includegraphics[width=0.75\columnwidth]{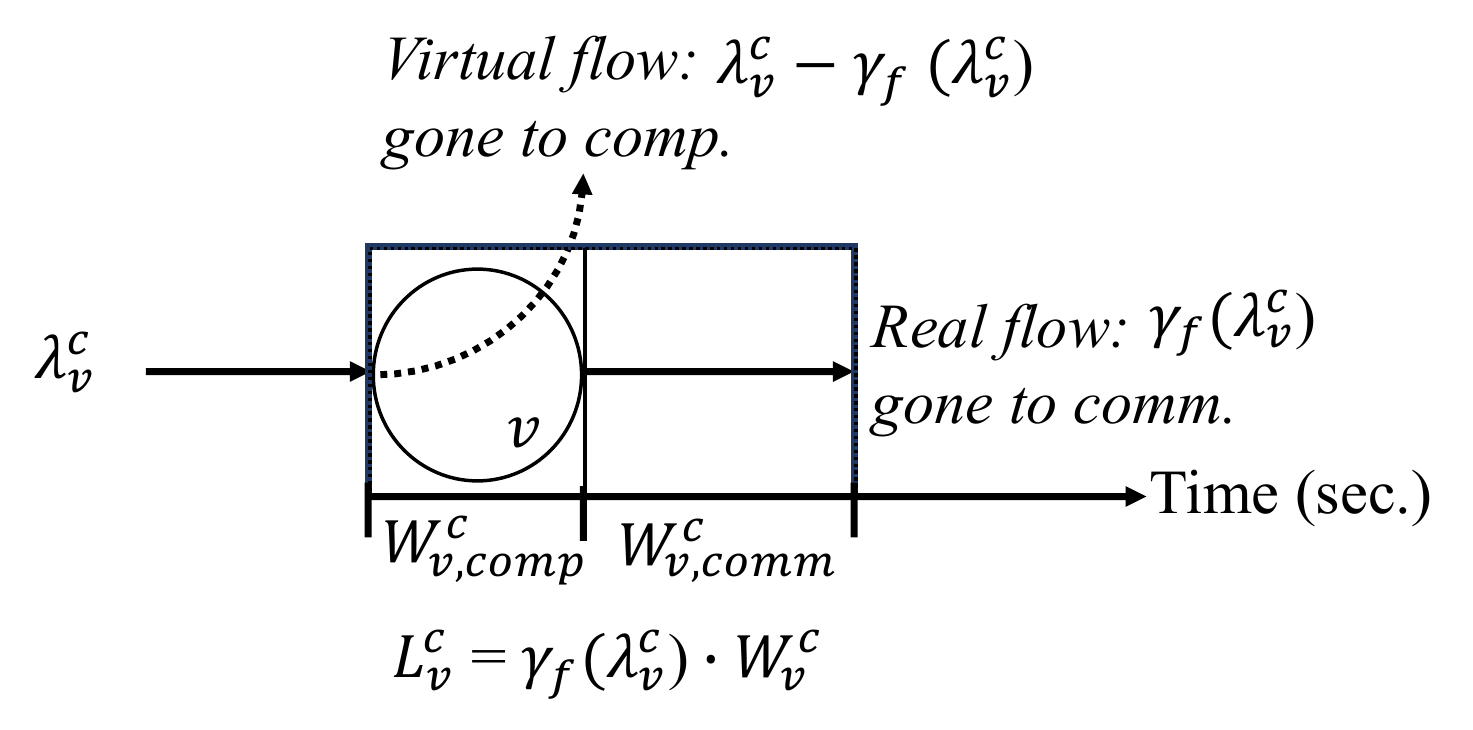}
\caption{Little's law in a computation scenario.}
\label{computeflow}
\end{figure} 

{\bf \em Computing with Little's Law.} We next introduce a novel relation for computation by building a connection between the limits of functional compression rates and Little's law in queueing theory. Little's law states that the long-term average number $L$ of packets in a stationary system is equal to the long-term average effective arrival rate $\lambda$ multiplied by the average time $W$ that a packet spends in the system. More formally, it can be expressed as $L=\lambda W$. The result applies to any system that is stable and non-preemptive, and the relationship does not depend on the distribution of the arrival process, the service distribution, and the service order \cite{Klein1975}.

By Little's Law, the long-term average number $L_v^c$ of packets in node $v$ for class $c\in C$ functions satisfies
\begin{align}
\label{LittleComputation}
L_v^c=\gamma_f(\lambda_v^c) \cdot W_{v}^c,
\end{align}
where $W_{v}^c$ is delay cost function that has 2 components: 
\begin{align}
\label{additive_delay_cost}
W_{v}^c=\ccomp+\ccomm 
\end{align}
that are positive and non-decreasing in the corresponding flows. We aim to infer the value of the computation flow rate generated for the input flow of rate $\lambda_v^c$ for class $c\in  C$ functions at $v$, i.e., $\gamma_f(\lambda_v^c)$, using Little's Law and the connection between $W_{v}^c$ and $L_v^c$ for the given network. It is also possible to generalize this to pipelining where the compute and communications tasks progress together. In that case the delay cost function satisfies $W_{v}^c=\max[\ccomp(m_v^c),\,\ccomm(n_v^c)]$.

Each node is equipped with a compute queue and communications queue which we will detail in Sect. \ref{costbreakdown}.  Therefore, $L_v^c=m_v^c+n_v^c$, where $m_v^c$ and $n_v^c$ are the  long-term  average  number  of  packets of class $c\in  C$ at node $v$ waiting for computation service and communications service, respectively, and are connected via (\ref{LittleComputation}). 
More precisely, the relations among $L_v^c$, $m_v^c$, and $n_v^c$ is determined as function of $\gamma_f(\lambda_v^c)$ as follows:
\begin{align}
\label{computationno}
m_v^c=L_v^c \cdot \left(1-\frac{\gamma_f(\lambda_v^c)}{\lambda_v^c}\right), \quad
n_v^c=L_v^c \cdot \frac{\gamma_f(\lambda_v^c)}{\lambda_v^c}.
\end{align}


We next introduce a novel flow-based notion for computing.

{\bf \em Characterizing the flow of computation.} 
The best functional compression rate attainable through a network can be given via extending the notion of graph entropy to capture the network's structure as well as function's surjectivity. 
We denote this rate by $\Hgf$ that satisfies $\Hgf\leq \gamma_f(\lambda_v^c)\leq \lambda_v^c < \mu_v^c$ where the lower bound is due to the limits of functional compression. 
To the best of our knowledge, there do not exist algorithms that exploit the theoretical limits provided by coloring characteristic graphs of functions to optimize $\gamma_f(\lambda_v^c)$. To capture the fundamental limits of compression, we next introduce a novel notion.
	
\begin{defi}{\bf Entropic surjectivity, $\Gamma_c$.} 
Entropic surjectivity of a function is how well the function $f_c: \X\to Y$, $c\in C$ can be compressed with respect to the compression rate of source symbols $\X$. We denote the entropic surjectivity of function $f_c$ with respect to $\X$ by
\begin{align}
\Gamma_c=\frac{\Hgf}{H(\X)}.
\end{align}
\end{defi}	
This is a measure of how well a network can compress a function that the destination wants to compute. Since surjective functions have high entropy via compression through a network, a function with high entropy yields a high entropic surjectivity, i.e., is harder to compress.

{\bf \em Flow requirements.} 
Note that $\Gamma_c$ is maximized when the function $f$ with domain $\X$ and codomain $Y$ is surjective, i.e., for every $y\in Y$ there exists at least one $\x\in \X$ with $f(\x)=y$. It is lower bounded by zero which is when the function maps all elements of $\X$ to the same element of $Y$. Therefore, $\Gamma$ can be used a measure of how surjective the function $f$ is.

Consider a function associated with class $c$, i.e., $f_c$. 
Total incoming flow rate needed is approximated as $H(\X)$. To compute $f_c(\X)$, the source needs to transmit a minimum of $\Hgf$ bits/source symbols. In this case, the proportion of flow that requires communications -- 
the proportion of generated flow as a result of computing 
-- needs to satisfy 
\begin{align}
\label{flow_entropy_relation}
\Gamma_c\leq \frac{\gamma_f(\lambda_v^c)}{\lambda_v^c}.
\end{align}	
Maximum reduction in communications flow, i.e., the flow that vanishes due to computation, is when the above is satisfied with equality, i.e., $\lambda_v^c-\gamma_f(\lambda_v^c)= \lambda_v^c (1-\Gamma_c)$ from the flow conservation constraints. 
In the regime of low compression, the network is communication intensive, i.e., $\gamma_f(\lambda_v^c)$ is high versus the computation intensive regime where higher compression is possible, i.e., $\gamma_f(\lambda_v^c)$ is low. In any regime, $\gamma_f(\lambda_v^c)$ has to be sufficiently large to ensure that the compute task is performed. For example, {\em Search function} has time complexity $\log(N)$ for an input size $N$ with rate $\lambda_v^c$. The entropy rate of the source satisfies $\Hg(\X)=H(X)=\lambda_v^c$ and the entropy rate of the function is $\Hgf\approx \log(\lambda_v^c)$. The {\em Classification function} has time complexity $\exp(N)$ and $\Hgf\approx \exp(\lambda_v^c)$.

We aim to approximate $\gamma_f(\lambda_v^c)$ using the connection between Little's law that relates the number of packets $L_v^c$ to function's entropic surjectivity $\Gamma_c$. To that end, we will next characterize the average computing time of different function classes for the proposed network setting. Later in Sect. \ref{flowanalysis} we will characterize the thresholds on $\gamma_f(\lambda_v^c)$ for successfully performing computations of different classes of functions.

\section{Cost Models}
\label{costbreakdown}
We explore the breakdown for the compute and communications time of a task and the 
time it spends in the system.

\subsection{Task Completion Time}
\label{taskcompletiontime}

{\bf \em Functional compression and surjection factor.} Intermediate computations at each node $v$ provides a compressed or refined representation. 
The cost of communications which equals the average waiting time, i.e., the sum of the average queueing and service times of a packet in $\s$, and given by the positive and convex function of the departing flow 
\begin{align}
\label{cost_comms}
\ccomm(n_v^c)=
\begin{cases}
\frac{1}{\mu_v^c-\gamma_f(\lambda_v^c)},\,\, v\in V',\, c\in  C,\\
\frac{1}{\mu_v^c-\beta^c},\,\, v=s,\, c\in  C, \\
0,\,\, v=0,\, c\in  C
\end{cases}
\end{align}	
is increasing in $n_v^c=\frac{\gamma_f(\lambda_v^c)}{\mu_v^c-\gamma_f(\lambda_v^c)}$ and upper bounded by $\frac{1}{\mu_v^c(1-\rho_v^c)}
$ if $\gamma_f(\lambda_v^c)=\lambda_v^c$ and $n_v^c=\frac{\rho_v^c}{1-\rho_v^c}$.  
It could instead be chosen to model the queue size, utilization, or the queuing probability. Our model does not account for the physical data transmission time, excluding the link layer aspects.

 
{\bf \em Average-case complexity.} We assume an average-case complexity model for computation where the algorithm does the computation in a sorted array (e.g., binary search).
Let $\df$ denote the time complexity for computing functions of class $c\in C$ packet at node $v\in V$ in the units of packets or bits.  
While $\df$ increases in the function's complexity, its behavior is determined by the function class $c$. 
It is given as function of the input size $m_v^c$, i.e., the number of packets needed to represent the input. 
Hence, $m_v^c$ denotes the number of packets of class $c$ at $v$ waiting for computation service. Assume that a class $c$ packet has an arrival rate $\lambda_v^c$ in $\mbox{bits}/\s$. Hence, the cost of computation is the average running time of packets for realizing the computation task, i.e., processing compute packets, in $\s$ 
\begin{align}
\label{cost_compute}
\ccomp(m_v^c) = 
\begin{cases}
\frac{1}{\lambda_v^c} \cdot \dfLvc, v\in V',\, c\in  C,\\
0,\,\, v=\{0,s\},\, c\in  C,
\end{cases}
\end{align} 
where we approximate time complexity of 
a flow of class $c$ at node $v$ as $\ccomp(m_v^c)\approx \frac{1}{\lambda_v^c} \cdot \df$, $v\in V',\, c\in  C$. The approximation is due to flow conservation we will detail later. 

Note that from (\ref{computationno})  $m_v^c$ increases in $\lambda_v^c$ because the processing factor increases always at a smaller rate than the incoming flow rate does. This in turn increases $\df$ and $\ccomp(m_v^c)$. 
Furthermore, the long-term average $L_v^c$ in (\ref{LittleComputation}) increases in $\lambda_v^c$ (under fixed $\mu_v^c$).  
Hence, the behavior of $\ccomm(n_v^c)$ is determined by (\ref{LittleComputation}). As a result $n_v^c$ increases if the scaling of $\gamma_f(\lambda_v^c)$ in $\lambda_v^c$ is linear, versus $n_v^c$ is less sensitive when the scaling is sublinear.

{\bf \em A taxonomy of functions.}
We consider three function categories and with different time complexities. For {\em Search function} that tries to locate an element in a sorted array, an algorithm runs in logarithmic time, which has low complexity. 
For {\em MapReduce function}, since the reduce functions of interest are linear, the algorithm runs in linear time, which is of medium complexity. For {\em Classification function}, we consider the set of all decision problems that have exponential runtime, which is of high complexity. 
The time complexity, i.e., the order of the count of operations, of these functions satisfies:
\begin{align}
\label{time_complexity_special_functions}
\df=\begin{cases}
O(\log(m_v^c)),\quad&\text{Search},\\
O(m_v^c),\quad&\text{MapReduce},\\
O(\exp(m_v^c)),\quad&\text{Classification}.\\
\end{cases}
\end{align}

Using the order of the count of operations in (\ref{time_complexity_special_functions}) and $\ccomp(m_v^c)$ in (\ref{cost_compute}), we can model the delay cost functions for computations of different classes of functions for $v\in V'$. 
In Sect. \ref{performance}, we will numerically investigate the behavior of computing cost. Note that when $\gamma_f(\lambda_v^c)=\lambda_v^c$ we have $m_v^c=0, \,\, n_v^c=\frac{\rho_v^c}{1-\rho_v^c}$ and $\ccomp(0)=0$, $\ccomm(n_v^c)= \frac{1}{\mu_v^c-\lambda_v^c}$, i.e., the computation time of the identity function is null.

In the special case when the computation cost is similar as the communication cost model, we assume that an arriving packet requires a compute service which is drawn from an exponential distribution with mean $1/\chi_v^c$ independently. Hence, the traffic intensity associated with class $c$ at node $v$ is $\sigma_v^c=(\lambda_v^c-\gamma_f(\lambda_v^c))/\chi_v^c\approx \lambda_v^c/\chi_v^c$. Hence, $\sigma_v=\sum\nolimits_{c\in C}\sigma_v^c$. Under the given assumptions and $m_v^c=\sigma_v^c/(1-\sigma_v^c)$, the compute cost for $v\in V'$ is given as
\begin{align}
\label{cost_compute}
\ccomp(m_v^c)=\frac{k_v^c}{\chi_v^c(1-\sigma_v^c)}=\frac{k_v^c m_v^c}{\chi_v^c \sigma_v^c}\,\, \s,
\end{align} 
which generalizes the cost of MapReduce in (\ref{time_complexity_special_functions}) due to the dependence of $\sigma_v^c$ on $m_v^c$. 
This model is yet to account for the routing information that we detail next. To generalize the above special case of convex compute cost in (\ref{cost_compute}) we will focus on a departure-based model (see (\ref{cost_compute_routing})).

\subsection{Task Load Balancing}
\label{loadbalancing}

In this section, we detail the routing model for different function classes and their conversions.

{\bf \em Causality and function class conversions.} 
In our  setting, at each node, computation is followed by computation. The average time a packet spends is given by the sum of the average times required by computation that is followed by communications, while satisfying the stability conditions. 
In general packets can change their classes when routed from one node to another \cite{nelson2013probability}. 
Hence, at each node, conversion between classes is allowed. Let $ C_v^{in}\subseteq C$ and $ C_v^{out}$ be the set of incoming and outgoing classes at node $v\in V$. We assume that $ C_v^{out}\subseteq C_v^{in}\subseteq C$ as computations on $ C_v^{in}$ can only reduce the number of function classes in the outgoing link. 
Each computation transforms a class $c\in C$ of flow into another class $c'\in C$ of flow. The transformations among classes will be characterized by a Markov chain, where the rates of transitions among the classes are known \cite[Ch. 10.3.8]{nelson2013probability}. We detail a Markov routing policy later in this section. Because $ C_v^{out}\subseteq C_v^{in}$, a computation transforms a flow class $c_i$ into a flow class $c_j$ if $j\geq i$. However, vice versa is not possible.

To refine the compute cost in (\ref{cost_compute}) for the networked setting, we allow packets departing from the network after service completion. Let $p_v^{dep}(c)=p_{v,0}^{rou}(c)$ denote the probability that a class $c$ packet departs upon completing its service at $v$, where $0\in V$ represents a virtual (shared destination) node that denotes the completion of a task. Hence, the departure rate of class $c$ packets from $v$ is $\gamma_f(\lambda_v^c) p_{v,0}^{rou}(c)$. We let $p_{v,w}^{rou}(c,c')$ be the probability that a class $c$ packet that finishes service at queue $v$ is routed to queue $w$ as a class $c'$ packet, and let $p_{v}^{rou}(c)=[p_{v,w}^{rou}(c,c')]_{c'\in C,\,w\in V}$. We will detail the concept of routing later in this section. Hence, we can refine the positive and non-decreasing compute cost in (\ref{cost_compute}) as 
\begin{align}
\label{cost_compute_routing}
\hspace{-0.25cm}\ccomp(m_v^c)\!=\!\frac{p_{v,0}^{rou}(c) m_v^c}{\lambda_v^c-\gamma_f(\lambda_v^c)}
\!+\!\sum\limits_{c'\in C}\sum\limits_{w\in V}\frac{p_{v,w}^{rou}(c,c') m_v^c}{\lambda_v^c-\gamma_f(\lambda_v^c) },
\end{align}
where the first term in the RHS captures the wasted computation rate due to departing packets of class $c$ in queue $v$ from the network, and the second term represents the cost of additional processing due to the routing of the packets of class $c\in C$ to other queues after finishing service from queue $v$ and transforming to any other class $c'\in C$. It also holds for every packet of a class $c$ that $p_{v,0}^{rou}(c)+\sum\limits_{c'\in C}\sum\limits_{w\in V}p_{v,w}^{rou}(c,c')=1$, i.e., upon leaving the compute queue of $v$, the packet is either routed to another node (while it might as well change its class) if its service is not completed, or departs the network.

{\bf \em Arrival/Departure/Function class conversion rates.}
Let the original arrival rate of class $c$ packets to the network be Poisson with rate $\beta^c$. Let $p_v^{arr}(c)=p_{s,v}^{rou}(c)$ be the probability that an arriving class $c$ packet is routed to queue $v$, where $s$ represents a virtual (shared) source node that denotes the origin of a task. Assuming that all arriving packets are assigned to a queue, we have that $\sum\limits_{v\in V}p_{s,v}^{rou}(c)=1$.  
We also assume that there is a virtual destination node $0\in V$ that indicates the completion of a task. 
The total departure rate of class $c$ packets from $v$ in the \textit{forward process} is given by $\lambda_v^c \cdot p_{v,0}^{rou}(c)$. With the above definitions, the routing probabilities between nodes and function classes have the structure:
\begin{align}
p_{s}^{rou}(c)&=[p_{s,v}^{rou}(c)]_{v \in V\backslash\{s\}}\in [0,1]^{(|V|-1)\times 1}\\
&=\left(\begin{smallmatrix} p_{s,v_1}^{rou}(c) & p_{s,v_2}^{rou}(c) & \dots & p_{s,v_{|V|-2}}^{rou}(c) & p_{s,0}^{rou}(c) \end{smallmatrix}\right)^{\intercal}.\nonumber
\end{align}
For $v\in V'$ we have the following matrix:
\begin{align}
\label{Proutingmulticlass}
p_{v}^{rou}(c)&=[p_{v,w}^{rou}(c,c')]_{w\in V\backslash\{s\},\,c'\in C}\in [0,1]^{(|V|-1)\times |C|}\\
=&\left(\begin{smallmatrix}
0 & \dots & 0 & p_{v,v_1}^{rou}(c,c) & p_{v,v_1}^{rou}(c,c+1) & \dots  & p_{v,v_1}^{rou}(c,|C|) \\
0 & \dots & 0 & 0 & p_{v,v_2}^{rou}(c+1,c+1) & \dots  & p_{v,v_2}^{rou}(c+1,|C|)\\
0 & \vdots & \vdots & \vdots & \vdots & \ddots & \vdots \\
0 & \dots & 0 & 0 & 0 & \dots  & p_{v,{v_{|V|-2}}}^{rou}(|C|,|C|) \\
0 & \dots & 0 & p_{v,0}^{rou}(c) & 0 & \dots & 0
\end{smallmatrix}\right)\nonumber
\end{align}
which is an upper-triangular matrix, where we note that $ C_v^{out}\subseteq C_v^{in}\subseteq C$, and conversion from $c$ to $c-1$ is not possible because $c-1$ has a higher complexity and this violates the purpose of computation, justifying the upper-triangular structure except for the sink nodes.

{\bf \em Routing for computing.} 
The network accommodates different types of state transitions, i.e., arrival, departure, routing, and internal state transitions \cite[10.6.3]{nelson2013probability}. We assume a Markov routing policy \cite[Ch. 10.6.2]{nelson2013probability} which can be described as follows. As a result of function computation, packets might have different classes and can change their class when routed from one node to another where routing probabilities depend on a packet's class. Let $p_{v,w}^{rou}(c,c')$ be the probability that a class $c$ packet that finishes service at node $v$ is routed to the compute queue of node $w$ as a class $c'$ packet. The probability that a class $c$ packet departs from the network after service completion at queue/node $v\in V'$ is given by
\begin{align}
p_{v,0}^{rou}(c)=1-\sum\limits_{w\in V'}\sum\limits_{c'\in C}p_{v,w}^{rou}(c,c'),
\end{align}
where the second term on the RHS denotes the total probability that the packets stay in the network. Since it is an open network model, for every class $c$ there is at least one value of $v$ so that $p_{v,0}^{rou}(c) > 0$. Thus all packets eventually leave the system. 

{\bf \em Flow conservation principles.}  
A node $v\in V$ can forward the flows it receives through the incoming edges of $v$, and generate a flow of class $c$ which is modeled via a self-loop by consuming/terminating equal amounts of incoming flows of class $c'\in C$. 
Conversion among classes is also possible. In the stationary regime the total arrival rate of class $c\in C$ packets to node $v$ is the lumped sum of the flow of class $c$ packets through its incoming edges or the flow via the self-loop. The flow that is gone to computation is the difference between the incoming flow and the generated flow, i.e., $\lambda_v^c-\gamma_f(\lambda_v^c)$. Due to flow conservation for each class, we have
\begin{align}
\label{throughputofclassc}
\hspace{-0.4cm}\lambda_v^c= 
\begin{cases}
\beta^c,\, v=s,\\ 
\beta_v^c + \sum\limits_{w\in V'}\sum\limits_{c'\in C}\gamma_f(\lambda_{w}^{c'}) p_{w,v}^{rou}(c',c), \, v\in V',\\
\sum\limits_{v\in V'} \gamma_f(\lambda_v^c) p_{v,0}^{rou}(c),\, v=0,
\end{cases}
\end{align}
where $\beta_v^c=\beta^c \cdot p_{s,v}^{rou}(c)$ denotes the original arrival rate of class $c$ packets that are assigned to the compute queue of node $v$, and the second term on the RHS denotes the aggregate arrival rate of packets that are routed to queue $v$ as a class $c$ packet after finishing service at other queues $w\in V$ as a class $c'\in C$. Note also that the term $\gamma_f(\lambda_{w}^{c'})$ denotes the total departure rate of class $c'$ packets from node $w$ rate of class $c'$ packets to queue $w$ (as a result of computation).


\begin{figure}[t!]
\centering
\includegraphics[width=0.8\columnwidth]{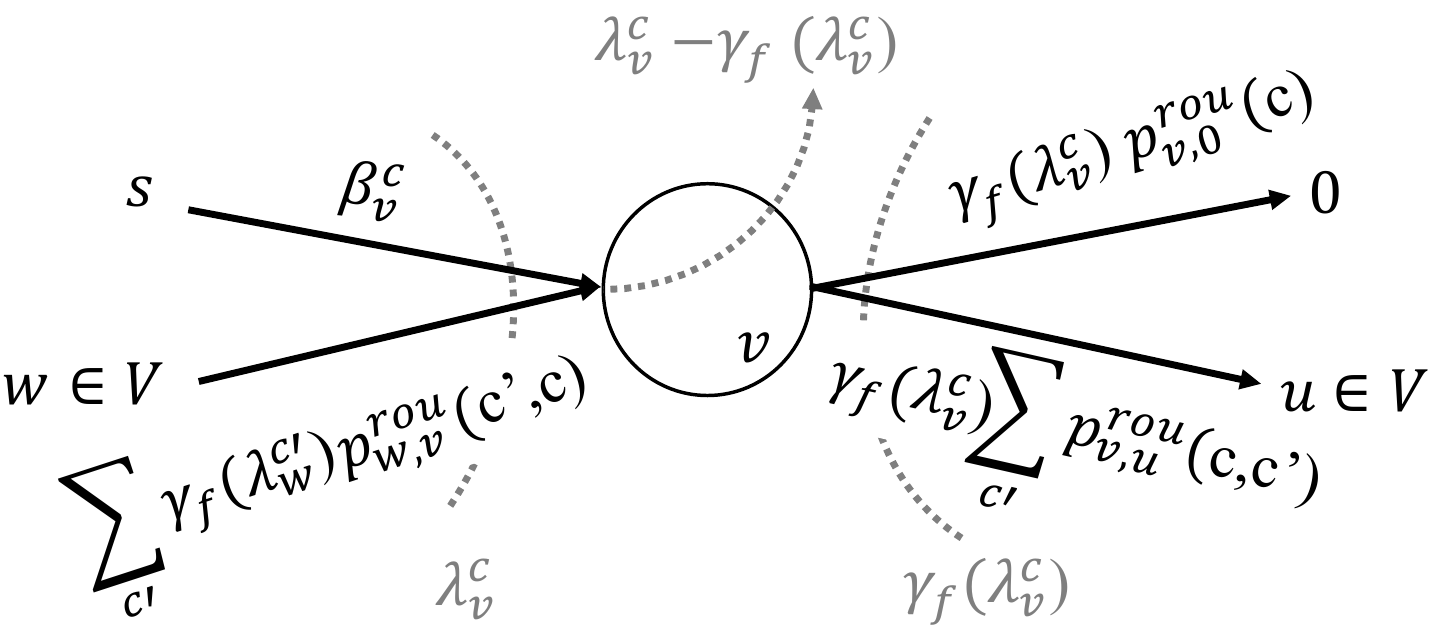}
\caption{Computational flow breakdown at $v\in V'$ where min cut $\lambda_v^c$ is the total arrival rate of flow $c$ 
that incorporates the original arrivals $\beta_v^c$ and the arrivals routed from $w \in V'$. Min cut $\gamma_f(\lambda_v^c)$ is the total generated rate of $c$, 
and the departures are routed to $u\in V'$.}
\label{flowpernode}
\end{figure} 		
In Fig. \ref{flowpernode}, we consider a node in isolation to illustrate the computational and communication flows at a typical node $v\in V'$ of the Jackson network. The min-cut that denotes the total arrival rate of computational flow $c$ is $\lambda_v^c$. This via (\ref{throughputofclassc}) captures the rate of original arrivals which is $\beta_v^c$, and the arrivals routed from any other node $w\in V'$ in the network. If there is no $w$ such that $p_{w,v}^{rou}(c',c)>0$, then $\lambda_v^c=\beta_v^c$. The cut $\gamma_f(\lambda_v^c)$ denotes the total generated rate (or processing factor) of computational flow $c$ at node $v$. The processed flow can be routed to any $u\in V'$ in the network if $p_{v,u}^{rou}(c,c')>0$. If there is no such node, then $\gamma_f(\lambda_v^c)$ departs the system.

{\bf \em Higher order properties.} The communication network $\mathcal{N}$ under the Markov routing policy can be modeled using a discrete-time Markov chain (DTMC) 
$Y_1, Y_2, Y_3, \dots$. The entropy rate for 
$(Y_k)$ on a countable number of states is $ H(Y) = - \sum_{ij} \pi_i P_{ij} \log P_{ij}$ where $\pi_i$ is the limiting distribution 
and $(P_{ij})$ 
is the state transition matrix. 
The rate $H(Y)$ can capture the routing information and characterize the distributions, i.e., the higher order properties, of 
the costs $\ccomm$ and $\ccomp$ incurred being in each 
state of the network. 
It can also explain the relation between routings and connectivity and how long it will take to compute. To that end, we next associate the states of the DTMC to a slotted computation model.

{\bf \em Characterizing the computing cost on a time-slot basis.} At each time slot $t$, the Markov chain is in state $i$ with probability $\pi_i$. We infer the following quantities at $t$, each indicated by $^{(t)}$ as function of the state $i$:
\begin{CompactEnumerate}
    \item ${n_v^c}^{(t)}$ and $\gamma_f({\lambda_v^c}^{(t)})$, and $\ccommt({n_v^c}^{(t)})$ using (\ref{cost_comms}),
    \item ${m_v^c}^{(t)}={n_v^c}^{(t)}\big(\frac{{\lambda_v^c}^{(t)}}{\gamma_f({\lambda_v^c}^{(t)})}-1\big)$ using (\ref{computationno}),
    \item $\ccompt({m_v^c}^{(t)})$ using (\ref{cost_compute}), and
    \item ${W_v^c}^{(t)}=\ccompt+\ccommt$, and its relation to $\gamma_f({\lambda_v^c}^{(t)})$.
\end{CompactEnumerate}

A class of function is easy to compute if the generated rate per unit time of computation is high. More precisely, in a DTMC with fixed slot duration $T$, the generated flow rate across $t_{\max}$ slots is $T\sum_{t=1}^{t_{\max}}{\gamma_f^{(t)}({\lambda_v^c}^{(t)})}$ in bits per second. If this number is large, then $f_c$ is easier to compute for given routings or connectivity. Hence, using this approach we can infer what classes of functions can be computed easily. 
If on the other hand, to account for the cost of computing more precisely, we assume a continuous-time Markov chain, we can instead consider a DTMC $Y_k$ to describe the $k^{th}$ jump of the process where the variables $S_1,S_2,S_3,\dots$ describe holding times in each state. 
In state $i$, the slot duration ${W_v^c}^{(t)}$ (holding time) will be approximated by the average time needed to perform computation on the distribution of classes, ${n_v^c}^{(t)}$, $c\in C$ in the current state, i.e., $\mathbb{E}[W_i]$. 
Hence, ${W_v^c}^{(t)}$ is the realization of the holding time sampled from the exponential distribution with rate parameter $-1/\mathbb{E}[W_i]$. Then the total number of bits generated over $t_{\max}$ slots is given by $\sum_{t=1}^{t_{\max}}{W_v^c}^{(t)}\cdot{\gamma_f^{(t)}({\lambda_v^c}^{(t)})}$ and the total cost is $\sum_{t=1}^{t_{\max}}{W_v^c}^{(t)}$. 

{\bf \em Routing information $H(Y)$ as a proxy for communication cost.} Given routings $P_{ij}$, the limiting distribution $\pi_i$ determines the network state $[\mathbf{n}; \mathbf{m}]$, which provides insights into computing different classes of functions. To that end we seek the relation between $H(Y)$ and $\Hgf=\Gamma_c H(\X)$ where 
$H(\X)$ 
is known.  
The entropy rate of $\{\gamma_f(\lambda_v^c)\}_{v\in V,c\in C}$ is the same as 
$H(Y)$ because the DTMC $(Y_k)$ fully describes $\gamma_f({\lambda_v^c})$. Thus, $H(Y)$ determines how balanced the flows for the different classes are. 
If the stochastic process $(Y_k)$ is i.i.d., $H(Y)=H(Y_k)$, $k=1,\dots, N$, the class distribution across different 
states is not distinguished, i.e., $\gamma_f(\lambda_v^c)$ is unchanged in $c\in C$. 
On the other hand, $H(Y)$ is lowered by mixing tasks $f_c$ of different complexities (due to concavity of entropy, i.e., $H(Y) 
\leq -\sum_j (\sum_i \pi_i P_{ij})\log(\sum_i \pi_i P_{ij})$ where equality is when the process is i.i.d.). In this case the DTMC states are not visited with the same frequency, i.e., $\pi_i$ is not uniform, as some tasks demand high resources $n_v^c$ and have higher $\gamma_f(\lambda_v^c)$. This may help reduce the use of communication resources. 
In this case, mixing function classes can provide resource savings in networked environments and help lower $H(Y)$. Hence, in a 
Jackson network with nodes having similar routing probabilities and each state being visited as often, it might not be possible to effectively compute different function classes  
versus 
in a more structured network with transition probabilities being tailored for tasks, i.e., task-based or weighted link reservations, to enable mixing of 
classes.

In Table \ref{notationtable} we provide some notation we use in the paper.

\begin{table}[h!]\footnotesize
\centering
\begin{tabular}{| l | l |}
\hline
{\bf Definition} & {\bf Function} \\
\hline
Set of function classes; Function of class $c$ & $ C=\{c\}$; $f_c(\X)$\\
\hline
Entropy of source & $H(\X)$\\
\hline
Entropic surjectivity; Graph entropy  of $f_c(\X)$ & $\Gamma_c$; $\Hgf$\\
\hline
Total arrival rate; Service rate; Traffic intensity & $\lambda_v^c$; $\mu_v^c$; $\rho_v^c$\\
Long-term average number of  packets   & $L_v^c$\\
Breakdown of $L_v^c$ into comms and comp. & $n_v^c,\,m_v^c$\\
Cost of communications & $\ccomm(n_v^c)$\\
Cost of computation & $\ccomp(m_v^c)$\\
Original arrival rate of computational flow & $\beta_v^c$\\
Computation flow rate generated from $\lambda_v^c$ & $\gamma_f(\lambda_v^c)$\\
Compute 
rate; Traffic intensity of computing & $\chi_v^c$; $\sigma_v^c$\\
Time complexity of generating/processing a flow  & $\df$\\
\quad corresponding to class $c$ packets at node $v$ & \\
\hline
Probability of a class $c$ packet finishing service  & \\
\quad at node $v$ being routed to node $w$ as class $c'$ & $p_{v,w}^{rou}(c,c')$\\
Routing matrix of class $c$ packets from node $v$ & $p_v^{rou}(c)$\\
\hline
\end{tabular}
\caption{Notation.}
\label{notationtable}
\end{table}

\section{Computation Flow Analysis}
\label{flowanalysis}
In the previous section, we explored how given routings or connectivity can tell us regarding the class of functions we can compute. In this section we are interested in the reverse question of how to optimize the routings and how the connectivity looks like to compute a {\em class of functions}. To that end investigate where to compute and how to compute a function class $f_c$ of known time complexity to optimize the average computation time in the Jackson network.

{\bf \em Computation allocation.} 
Our goal is to understand whether a divide-and-conquer-based approach is more favorable than a centralized approach. In divide-and-conquer a subset of nodes work on the sub-problems of a given task, which are to be combined to compute the target value either at any node, including the destination. In a centralized approach tasks are not split into sub-problems. Instead they are run at once.

{\bf \em Processing of single flow.} We first consider the distribution and cost of single flow in isolation in the Jackson network and we drop the superscript $c$ due to the single class assumption.

\begin{ex}\label{ex_surjection_factor_singleflow_search}

{\bf \em Bisection -- tree branching over the network.} The network computes $f(\X)=\min \X$ using the bisection method. The arrivals $\X$ are uniformly split among nodes, i.e., $\beta_v=\beta$ $\forall v$. Node $v$ works on the set ${\X}_v=(X_{(v-1)\frac{N}{|V|}},\dots, X_{v\frac{N}{|V|}})$ and computes $f(\X_v)$ locally, i.e., $p_{s,v}^{rou}=\frac{|V|}{N}$. 

(i) No intermediate routing is allowed, i.e., $p_{v,0}^{rou}=1$. The computation outcome is directly routed to $0\in V$ that decides the final outcome $\min_{v\in V} f(\X_v)=f(\X)$. The computation time complexity of the initial stage per node is $O(\log(\frac{N}{|V|}))$ and the routing stage is $\log(|V|)$. Hence, the total computation complexity is $|V|O(\log(\frac{N}{|V|}))+\log(|V|)$. The communications cost is due to the routings $(v,0)$ for $v\in V'$, which in total gives $|V|\exp(1)$ (every node does 1 transmission).

(ii) Intermediate routings are allowed. We pick a subset of nodes $W\subset V$ with $|W|=O(\log(|V|))$ to run the bisection algorithm, i.e., $p_{v,w}^{rou}\approx \frac{1}{\log(|V|)}, \, w\in W$ and $p_{v,w}^{rou}=0,\, w\in V\backslash W$. Hence, the time complexity of the initial stage per $v\in V\backslash W$ is $O(\log(\frac{N}{|V|}))$. Node $w\in W$ then computes $f(\X_w)=\min\nolimits_{v: p_{v,w}^{rou}>0} f(\X_v)$, and routes intermediate computation to $0\in V$ which in turn computes $\min_{w\in W} f(\X_w)=f(\X)$ that has time complexity $O(\log(|W|))$. The total compute complexity is $(|V|-|W|)O(\log(\frac{N}{|V|}))+|W|\log(\frac{|V|-|W|}{|W|})+\log(|W|)$. The communications cost is due to the routings $(v,w)$ and $(w,0)$ for $v\in V',\, w\in W$, which in aggregate gives $(|V|-|W|)\exp(1)+|W|\exp(1)=|V|\exp(1)$. 

From (i)-(ii) the total cost is determined by the computation complexity, which can be reduced via intermediate routings. 
The highest gains are possible when the depth of the network grows like $O(\log(N))$ that favors divide-and-conquer. A similar approach will follow for MapReduce.

\end{ex}

We next consider another single flow example.

\begin{ex}\label{ex_surjection_factor_singleflow_classification}

{\bf \em Classification.} We consider the classification problem, i.e., the problem of identifying to which of a set of categories an observation $\X$ belongs. In {\em linear classification}, the predicted category is the one with the highest score, where the score function has the following dot product form $f(\X,l)=score(\X,l)={\bm \beta}_l\cdot \X = \sum_{k=1}^N \beta_{kl} X_k$ where ${\bm \beta}_l$ is the vector of weights corresponding to category $l$, and $f(\X,l)$ is the score associated with assigning $\X$ to $l$ (which represents the  utility associated with $\X$ being in category $l$).

(i) If data is not split, then the computation cost is the cost equivalent of $N$ multiplications, i.e., $O(N\log(N))$.

(ii) If data is split between a subset of nodes $W$ such that ${\X}_w=(X_{(w-1)\frac{N}{|W|}},\dots, X_{w\frac{N}{|W|}})$, then $\{X_k\}_k$ and $\{\beta_{kl}\}_l$ should also be provided for accurately computing the score. This requires a cost equivalent to $O(\frac{N}{|W|}\log N)$ where $\log N$ bits is to quantify the locations of $\{X_k\}_k$ for each $w$. The compute cost is $O(\frac{N}{|W|}\log N \log(\frac{N}{|W|}\log N))$. Coordinating such intermediate computations requires $|W|$ additions, yielding a total cost of  $O(\frac{N}{|W|}\log N \log(\frac{N}{|W|}\log N))+O(|W|)$. To ensure that the computation cost in (ii) is smaller than the one of (i), $|W|\geq O(\log N)$. Otherwise, the compute cost will be higher than $O(N\log N)$. The communications cost of forwarding the intermediate results of $w\in W$ is determined by the flow amount which is at least $H(f({\X}_w,l))$ at $w\in W$. If $|W|$ is small, score estimates are easy to obtain due to the weak law of large numbers 
(a large ratio $\frac{N}{|W|}$ gives a good estimate of scores $f({\X}_w,l)$ as the sample average converges in probability to the expected value, yielding a low entropy), but if $|W|$ is large, this might not be possible, causing a high $H(f({\X}_w,l))$ (e.g., if the classifier is sensitive to ${\bm \beta}_l$ or ${\bm \beta}_l$ need to be trained, incurring a high cost for task distribution) that in turn requires high communication cost per $w\in W$. As the communication costs accumulate, distributing such flows might not be favorable. 
However, if the weights ${\bm \beta}_l$ for category $l$ are not 
sensitive to the data coordinates $K=1,\dots, N$ of $\X$, i.e., the observations are easy to classify, then the communications overhead can be lowered). Hence, if splitting does not cause 
jumps in $H(f({\X}_w,l))$, it might be favorable. 
\end{ex}

From Examples \ref{ex_surjection_factor_singleflow_search}, \ref{ex_surjection_factor_singleflow_classification}, how to distribute tasks depends on the task complexity and the associated communication resources to leverage the distributed computation. Tasks with high complexity, as long as the compute resources are sufficient, require centralized processing. If one node is not that powerful, then distribute (since $\gamma_f(\lambda_v^c)$ will be large). 
It is also intuitive that low complexity tasks, such as Search and MapReduce can be distributed over the network by splitting dataset. However, tasks with high complexity such as Classification might not be feasible in a distributed manner because one needs the whole dataset (as we detailed in Example \ref{ex_surjection_factor_singleflow_classification}).
To understand how $\gamma_f(\lambda_v^c)$ impacts the processing, we next focus on the simultaneous processing of multiple flows.

{\bf \em Processing of multiple flows.} We explore how to allocate the computations  to conduct the computation task effectively without requiring the protocol information. To that end we next sketch the behavior of cost of 3 different function classes. 

\begin{ex}\label{ex_surjection_factor_multiflow}

{\bf \em Case I. Sublinear surjection factor.} 
We assume that $\gamma_f(\lambda_v)= o(\lambda_v^c)$. This implies from (\ref{computationno}) that $L_v\propto m_v^c$ and $n_v^c=o(L_v^c)=o(m_v^c)$. Due to the convexity of the communication cost, from (\ref{cost_comms}) $n_v^c=O(\exp(\gamma_f(\lambda_v^c)))=O(\lambda_v^c)$. In this case, the costs corresponding to different classes are
\begin{align}
\ccomp(m_v^c)=\begin{cases}
\frac{1}{\lambda_v^c} O(n_v^c),\quad&\text{Search},\\ 
\frac{1}{\lambda_v^c} O(\exp(n_v^c)),\quad&\text{MapReduce},\\
\frac{1}{\lambda_v^c} O(\exp(\exp(n_v^c))),\quad&\text{Classification}.
\end{cases}\nonumber
\end{align}

{\bf \em Case II. Linear/Proportional surjection factor.} 
We assume that $\gamma_f(\lambda_v)= O(\lambda_v^c)$, implying that $L_v=O(m_v^c)=O(n_v^c)$. From (\ref{cost_comms}) $n_v^c=O(\exp(\lambda_v^c))$. In this case, the costs corresponding to different classes can be written as
\begin{align}
\ccomp(m_v^c)=\begin{cases}
\frac{1}{\lambda_v^c} O(\log\big(n_v^c\big)),\quad&\text{Search},\\ 
\frac{1}{\lambda_v^c} O(n_v^c),\quad&\text{MapReduce},\\
\frac{1}{\lambda_v^c} O(\exp(n_v^c)),\quad&\text{Classification}.
\end{cases}\nonumber
\end{align}
In Case I, the computation cost is either very high because $(\lambda_v^c)^*$ is high (due to $\ccomp(m_v^c)\geq \ccomm(n_v^c)$) or relatively higher than the communication cost. In these regimes, low complexity tasks, such as Search and MapReduce should be done distributedly (because the dynamic range of $\lambda_v^c$ is smaller), and high complexity tasks can be run centralized provided that compute resources suffice. On the other hand, in Case II, the communication cost dominates the computation cost as $(\lambda_v^c)^*$ is low, leading to a more distributed setting.

\end{ex}

{We next contrast the costs for Cases I-II. Task distribution 
suits for noncomplex tasks Search and MapReduce (Case I, low $\gamma_f(\lambda_v^c)$) since 
the maximum $\lambda_v^c$ that $v$ supports, $(\lambda_v^c)^*$, grows with the complexity, allowing centralized processing of complex tasks, e.g., Classification, as shown in Fig. \ref{ComputationAllocation}. 
If $\ccomm$ dominates the cost (Case II, high $\gamma_f(\lambda_v^c)$), a divide-and-conquer-based approach is favorable as $(\lambda_v^c)^*$ is low. However, 
when $\ccomm$ is rather negligible, 
the dynamic range of $\lambda_v^c$ grows and centralized processing may be feasible. Hence, for low (high) complexity tasks the network may operate in a connected (isolated) fashion. If the task complexity is heterogeneous, task-based link reservations become favorable, and the network may be sparsely connected as $\ccomm$ starts to dominate. 
If the routings are symmetric, then the tasks are distributed in which case realizing communication intensive 
tasks in a distributed fashion may cause high $\ccomm$.}

{\bf \em Routing schemes.} Fig. \ref{ComputationAllocation} also provides insights into flow deviation between different tasks, where the application of successive flow deviations leads to local minima as in the gradient method \cite{fratta1973flow}.  
For example, for given processing capability $\mu_v$ at $v\in V$  flows of high complexity tasks requiring centralized processing (due to high $(\lambda_v^c)^*$) versus low complexity tasks running in a distributed manner (due to small $(\lambda_v^c)^*$) can be traded-off. 
We observe that Case I 
is tailored for a hybrid regime, i.e., a mixture of distributed and centralized operation of tasks which can be done via flow deviation between tasks, and Case II 
enforces distributed computing of the tasks. 

{\bf \em Multiple flows and class conversions.} For simplicity of the exposition, we next consider a simple topology with $|V|=3$ nodes with identical processing capabilities, and distribute the different function classes (Search, MapReduce, Classification) 
on them. 

(i) Separating flows. Each node works on a distinct function class independently. Hence, the compute cost is separable and independent from the routings. The communications cost is only between links $(v,0)$, $v\in V'$ and can be computed using (\ref{cost_comms}) where $\gamma_f(\lambda_v^c)=\Hgf$. Using (\ref{computationno}), $m_v^c$ and $n_v^c$ can be determined as function of $L_v^c$ and $W_v^c$ is nonzero at each node only for the function class it works on and it satisfies (\ref{additive_delay_cost}). Hence, total cost is easily determined. 

(ii) Mixing flows. Each class is routed based on $p_{s}^{rou}(c)=[\frac{1}{3},\frac{1}{3},\frac{1}{3}]$, computation is evenly split among the nodes, and  $p_{v}^{rou}(c)=[p_{v,v_1}^{rou}(c),p_{v,v_2}^{rou}(c),p_{v,v_3}^{rou}(c),p_{v,0}^{rou}(c)]^{\intercal}$ where $p_{v,0}^{rou}(c)=\frac{\Hgf}{H(\X)}$ and $p_{v,w}^{rou}(c)=\frac{1-p_{v,0}^{rou}(c)}{3}$ for $w,v\in V'$.

We will simulate this setting in Sect. \ref{performance} and show that under some conditions on the surjection factor $\gamma_f(\lambda_v^c)$ and processing power $\mu_v^c$ it is possible to support a higher $L_v^c$ via mixing flows.

\begin{figure}[t!]
\centering
\includegraphics[width=0.9\columnwidth]{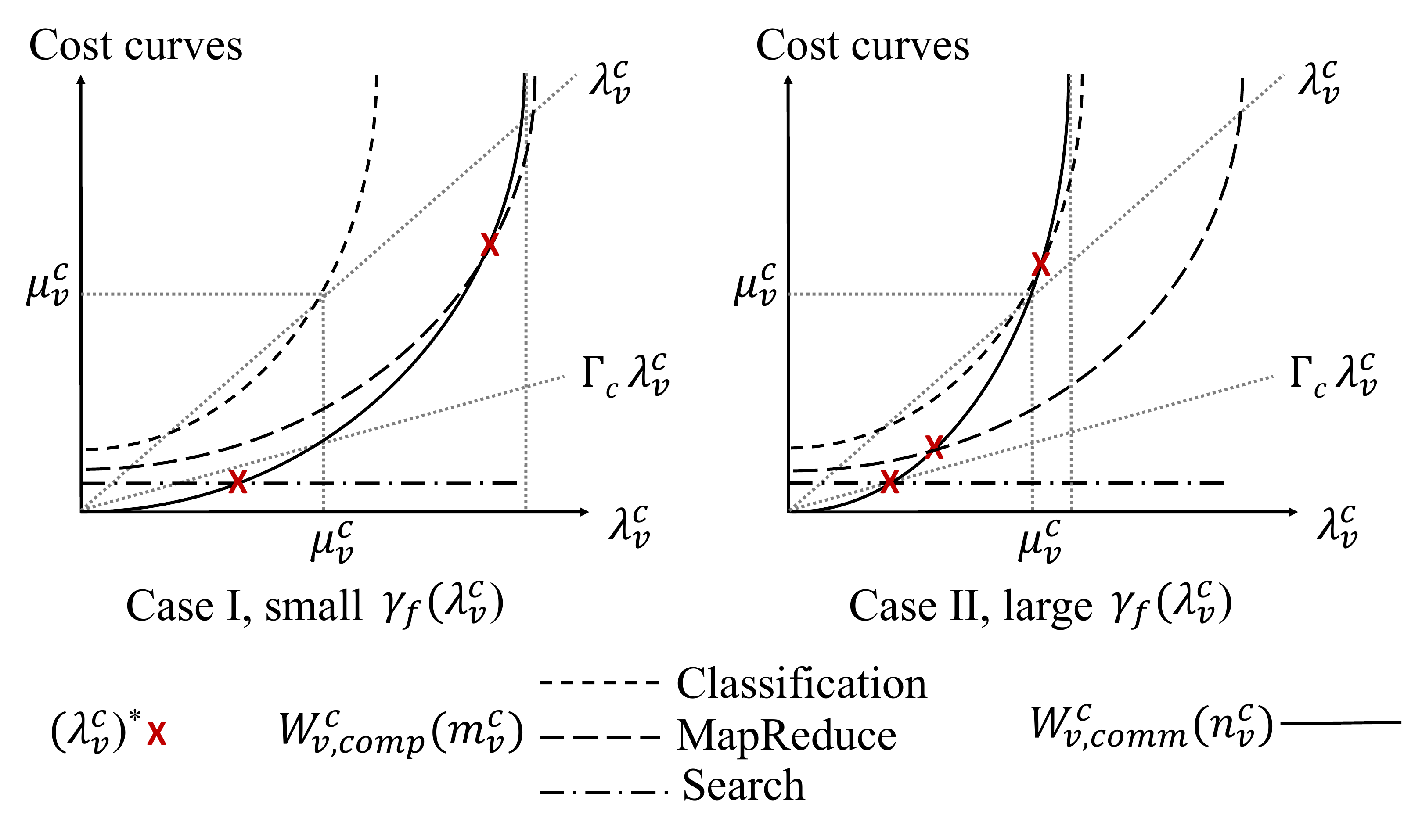}
\caption{Cost versus $\lambda_v^c$ where different line types represent distinct functions and the solid convex curve corresponds to $\ccomm$. (Left) Sublinear surjection where $\lambda_v^c\in [0,(\lambda_v^c)^*]$ has a high dynamic range. (Right) Superlinear surjection where $\ccomm$ is steep, leading to distributed computing of functions due to smaller range of $\lambda_v^c$.}
\label{ComputationAllocation}
\end{figure}

\subsection{Load Thresholds for Computing}
\label{loadthresholds}
In this section we explore the fundamental limits of the traffic intensities or loads associated with different computation tasks. We next define a threshold on traffic intensity $\rho_{th}$ such that for the range $\rho_v^c<\rho_{th}$ computation is allowed, 
and no computation is allowed for $\rho_v^c>\rho_{th}$.

\begin{prop}\label{LoadThreshold}{\bf A load threshold for distributed computing.}
A node $v\in V$ can do computation of a class $c\in C$ function if the following condition is satisfied:
\begin{align}
\label{condition_computation}
\rho_{th}=\min\limits_{\rho_v^c\geq 0}\Big[\rho_v^c \Big\vert \frac{(\rho_v^c)^2}{1-\rho_v^c}> \df\frac{1-\rho_v^c\Gamma_c}{1-\Gamma_c}\Big],
\end{align}
where $\rho_{th}\to 1$ as $m_v^c\to\infty$ since $\df$ 
increases in $m_v^c$.
\end{prop}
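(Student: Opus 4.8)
The plan is to reduce the defining inequality to explicit rational functions of $\rho_v^c$ and $\Gamma_c$ by combining Little's Law with the M/M/1 product form, and then to read off $\rho_{th}$ as the smallest load at which the communications occupancy overtakes the computational work. First I would take the equality case of the functional-compression bound (\ref{flow_entropy_relation}), $\gamma_f(\lambda_v^c)=\Gamma_c\,\lambda_v^c$, which is the operating point of maximal compression. Then the communications queue is M/M/1 with traffic intensity $\Gamma_c\rho_v^c$, so $n_v^c=\Gamma_c\rho_v^c/(1-\Gamma_c\rho_v^c)$. Feeding this back through the packet-count breakdown (\ref{computationno}), where $n_v^c/L_v^c=\gamma_f(\lambda_v^c)/\lambda_v^c=\Gamma_c$, gives $L_v^c=\rho_v^c/(1-\Gamma_c\rho_v^c)$ and hence
\begin{align}
m_v^c=L_v^c(1-\Gamma_c)=\frac{\rho_v^c(1-\Gamma_c)}{1-\Gamma_c\rho_v^c}. \nonumber
\end{align}
This is the key bookkeeping step: it pins both queue occupancies to $\rho_v^c$ and $\Gamma_c$ alone.

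Next I would bring in the two cost terms from Sect.~\ref{costbreakdown}. The compute cost is $\ccomp=\df/\lambda_v^c$, so $\lambda_v^c\,\ccomp=\df$, while the baseline identity-function regime ($\gamma_f(\lambda_v^c)=\lambda_v^c$) recorded under (\ref{cost_comms}) has communications occupancy $n_v^{c,\mathrm{full}}=\rho_v^c/(1-\rho_v^c)$. The criterion for a node to host the computation of class $c$ is that the computational work $\df$ it must absorb be large relative to the communications waiting mass the load generates; concretely, computation is allowed while $\df>n_v^{c,\mathrm{full}}\,m_v^c$, and $\rho_{th}$ is the load at which equality is reached. Substituting the closed forms,
\begin{align}
\frac{\rho_v^c}{1-\rho_v^c}\cdot m_v^c=\frac{(\rho_v^c)^2(1-\Gamma_c)}{(1-\rho_v^c)(1-\Gamma_c\rho_v^c)}, \nonumber
\end{align}
so the boundary $\df=n_v^{c,\mathrm{full}}\,m_v^c$ rearranges to exactly $\frac{(\rho_v^c)^2}{1-\rho_v^c}=\df\,\frac{1-\rho_v^c\Gamma_c}{1-\Gamma_c}$, which is the equation in (\ref{condition_computation}); for $\rho_v^c>\rho_{th}$ the displayed strict inequality holds and computation is disallowed, while for $\rho_v^c<\rho_{th}$ the compression benefit dominates the work and computation is allowed.

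It then remains to identify $\rho_{th}$ and establish the limiting behaviour. For fixed $\Gamma_c$ and a fixed work level, the left-hand side $(\rho_v^c)^2/(1-\rho_v^c)$ increases strictly from $0$ to $\infty$ on $[0,1)$, whereas the factor $(1-\rho_v^c\Gamma_c)/(1-\Gamma_c)$ on the right decreases in $\rho_v^c$; hence the two sides cross at a single point, which is the minimiser in (\ref{condition_computation}). Finally, treating $\df=d_{f_c}(m_v^c)$ as increasing in the input size $m_v^c$, a larger $\df$ scales the right-hand side up uniformly, pushing the crossing point to the right; since the left side diverges only as $\rho_v^c\to1$, this yields $\rho_{th}\to1$ as $m_v^c\to\infty$.

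The main obstacle I anticipate is the self-referential dependence $\df=d_{f_c}(m_v^c)$ with $m_v^c$ itself a function of $\rho_v^c$: both sides of (\ref{condition_computation}) then move with $\rho_v^c$, so uniqueness of the crossing (and hence that $\rho_{th}$ is a single threshold rather than a union of intervals) must be secured from the monotonicity of $d_{f_c}(\cdot)$ together with the boundedness of $m_v^c$ on $[0,1)$. Care is also needed to justify the equality case of (\ref{flow_entropy_relation}) as the relevant operating point, to reconcile the mixed form of the criterion (baseline communications occupancy against the compressed compute occupancy), and to keep the approximation $\ccomp\approx\df/\lambda_v^c$ consistent with the flow-conservation identities.
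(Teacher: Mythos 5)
The boundary curve you arrive at is the correct one, and your Little's-law bookkeeping at the maximal-compression operating point $\gamma_f(\lambda_v^c)=\Gamma_c\lambda_v^c$ (giving $n_v^c=\Gamma_c\rho_v^c/(1-\Gamma_c\rho_v^c)$, $L_v^c=\rho_v^c/(1-\Gamma_c\rho_v^c)$, $m_v^c=L_v^c(1-\Gamma_c)$) is consistent with the paper's relations. The genuine gap is the decision rule itself: ``computation is allowed while $\df>n_v^{c,\mathrm{full}}\,m_v^c$'' is asserted, not derived --- it compares a time complexity against a \emph{product of two queue occupancies}, and nothing in the cost model motivates that comparison; you essentially reverse-engineered a criterion that lands on the known boundary, as you half-concede when you list ``reconciling the mixed form of the criterion'' as an obstacle. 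The paper's proof is instead a one-line delay comparison in consistent units (seconds against seconds): node $v$ computes when the communication-only delay exceeds the compute-then-communicate delay,
\begin{align}
\frac{1}{\mu_v^c(1-\rho_v^c)} \;>\; \frac{\df}{\lambda_v^c}+\frac{1}{\mu_v^c(1-\rho_v^c\Gamma_c)},\nonumber
\end{align}
which upon multiplying through by $\lambda_v^c=\rho_v^c\mu_v^c$ and simplifying yields (\ref{condition_computation}) directly. Crucially, this condition is algebraically equivalent to $\df < n_v^{c,\mathrm{full}}\,m_v^c$ --- the exact \emph{negation} of your criterion. So although both derivations produce the same crossing point, your orientation is flipped relative to the paper's proof: the delay comparison says computing pays off precisely where the displayed inequality \emph{holds} (large $\rho_v^c$, where the uncompressed M/M/1 delay $1/(\mu_v^c(1-\rho_v^c))$ blows up while the compressed delay stays bounded for $\Gamma_c<1$), whereas you conclude computation is allowed for $\rho_v^c<\rho_{th}$ and disallowed above it.

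In fairness, the paper is internally inconsistent on this point: the prose preceding the proposition (and the numerical section) says computation is allowed for $\rho_v^c<\rho_{th}$, which is what you matched, while the proof's own inequality is the computing condition and holds only above the crossing; the sanity check at $\rho_v^c\to 1$ supports the proof's orientation, not the prose's. Your supplementary analysis is actually more careful than the paper's: the uniqueness of the crossing (left side strictly increasing from $0$ to $\infty$ on $[0,1)$, right side decreasing) and the limit $\rho_{th}\to 1$ as $\df\to\infty$ are argued nowhere in the paper, and your worry about the self-reference $\df=d_{f_c}(m_v^c)$ is well placed --- indeed under your formulas $m_v^c=\rho_v^c(1-\Gamma_c)/(1-\Gamma_c\rho_v^c)\leq 1$ on $[0,1)$, so $m_v^c\to\infty$ must come from scaling the input size, not the load. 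But none of this repairs the central step: the threshold should be obtained from the per-node delay comparison between computing and not computing, and with that criterion the side of $\rho_{th}$ on which computation is triggered comes out opposite to what you state.
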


\begin{proof}
$\rho_{th}$ is obtained by comparing the delays with and without computation, on a per node basis. 
If the delay caused only by communication is higher than the total delay caused by computation followed by communication $W_v^c$ given by the lump sum of $\ccomp(m_v^c)$ in (\ref{cost_compute}) and $\ccomm(n_v^c)$ in (\ref{cost_comms}), i.e., the following condition is satisfied at $v\in V'$:
\begin{align}
\frac{1}{\mu_v^c(1-\rho_v^c)} > \frac{1}{\lambda_v^c}\df+\frac{1}{\mu_v^c\left(1-\rho_v^c\Gamma_c\right)},\nonumber
\end{align}
where $\gamma_f(\lambda_v^c)=\lambda_v^c\Gamma_c$, then $v$ decides to compute. 
\end{proof}

We next provide a necessary condition for stability of the 
network model for computation. 
\begin{prop}\label{stability}
The computation network is stable if the computation delay is at least as much as the communications delay, i.e., if the following condition is satisfied: 
$$\df\geq n_v^c,\quad c\in C,\,\, v\in V.$$ 
\end{prop}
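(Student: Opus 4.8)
The plan is to reduce global stability to the stability of each queue in isolation, which is licensed by the product-form structure of the Jackson network established in Sect.~\ref{networkmodel}: since the steady-state laws factorize as $\pi(\mathbf{n})=\prod_{v}\pi_v(\mathbf{n}_v)$ and $\phi(\mathbf{m})=\prod_{v}\phi_v(\mathbf{m}_v)$, the whole network is positive recurrent precisely when every communications queue and every computation queue is individually stable, i.e. has a normalizable geometric stationary law. Hence it suffices to certify, for each $c\in C$ and $v\in V$, that both the compute utilization $\sigma_v^c$ and the communications utilization $\gamma_f(\lambda_v^c)/\mu_v^c$ stay strictly below one.

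First I would dispense with the communications side. By the functional-compression bound of Sect.~\ref{networkmodel}, $\Hgf\le\gamma_f(\lambda_v^c)\le\lambda_v^c<\mu_v^c$, so $\gamma_f(\lambda_v^c)/\mu_v^c<1$ holds unconditionally and $n_v^c=\gamma_f(\lambda_v^c)/(\mu_v^c-\gamma_f(\lambda_v^c))$ from (\ref{cost_comms}) is finite for every $c,v$. Thus the communications queues are always stable, and the only genuine obstruction to stability of a node is a compute queue whose backlog $m_v^c$ diverges.

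Next I would put the two delays on a common footing so that the hypothesis $\df\ge n_v^c$ becomes a statement about backlogs. Applying Little's law to the compute sub-queue, whose arrival rate is $\lambda_v^c$ and whose per-packet delay is the running-time cost $\ccomp(m_v^c)=\df/\lambda_v^c$ of Sect.~\ref{costbreakdown}, gives $m_v^c=\lambda_v^c\,\ccomp(m_v^c)=\df$; applying it to the communications sub-queue with arrival rate $\gamma_f(\lambda_v^c)$ and delay $\ccomm(n_v^c)$ gives $n_v^c=\gamma_f(\lambda_v^c)\,\ccomm(n_v^c)$, in agreement with (\ref{computationno}). Hence the hypothesis is equivalent to $m_v^c\ge n_v^c$, i.e. at each node the compute stage holds at least as many packets as the communications stage, so computation is the bottleneck of the compute-then-communicate tandem. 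Using the geometric forms $m_v^c=\sigma_v^c/(1-\sigma_v^c)$ and $n_v^c=(\gamma_f(\lambda_v^c)/\mu_v^c)/(1-\gamma_f(\lambda_v^c)/\mu_v^c)$ together with the monotonicity of $x\mapsto x/(1-x)$, this reads $\sigma_v^c\ge\gamma_f(\lambda_v^c)/\mu_v^c$. The remaining step converts ``computation is the bottleneck'' into boundedness of $m_v^c$: because the compute stage feeds the downstream stage at the strictly reduced rate $\gamma_f(\lambda_v^c)\le\lambda_v^c<\mu_v^c$, the faster communications stage exerts no back pressure, the tandem is governed by the single bottleneck whose input rate $\lambda_v^c$ lies below its service capacity, and a Foster--Lyapunov drift on $V(\mathbf{m},\mathbf{n})=\sum_{v,c}(m_v^c+n_v^c)$ certifies positive recurrence node by node, which product form then lifts to the entire network.

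The main obstacle is precisely the sufficiency direction, and I would flag it openly. Read literally, $\df=d_{f_c}(m_v^c)$ is increasing in $m_v^c$, so a diverging compute backlog forces $\df\to\infty$ and makes $\df\ge n_v^c$ hold vacuously; the inequality is in that sense a \emph{lower} bound on $m_v^c$ and by itself does not cap it. The resolution is to interpret $\df$ through the fixed-point identity $m_v^c=\df$ derived above, so that the condition constrains the \emph{ratio} $m_v^c/n_v^c$ (equivalently $\sigma_v^c\ge\gamma_f(\lambda_v^c)/\mu_v^c$) rather than the magnitude, and to supply the compute-queue upper bound separately from the standing assumption $\sigma_v^c<1$. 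Making the tandem-bottleneck comparison rigorous — showing that once computation is the binding constraint and the flow is strictly reduced the dominated downstream stage can never destabilize the node — is the technical crux; the cleanest route is the drift argument above, with the flow-reduction inequality $\gamma_f(\lambda_v^c)\le\lambda_v^c$ doing the essential work.
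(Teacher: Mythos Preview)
Your route is considerably more elaborate than the paper's and, in an important sense, more honest about what the statement can support. The paper's proof is a two-line contrapositive: it simply observes that if $\df<n_v^c$ then, equivalently, $\ccomp(m_v^c)=\df/\lambda_v^c<\ccomm(n_v^c)=1/(\mu_v^c-\gamma_f(\lambda_v^c))$, and argues heuristically that in this regime the communications backlog $n_v^c$ would accumulate and break stationarity. There is no product-form reduction, no Foster--Lyapunov step, and no separate treatment of the compute queue; the whole argument is the delay comparison you reach in your third paragraph, run in the contrapositive direction.

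The obstacle you flag in your last paragraph is real and the paper does not resolve it either: as written, $\df\ge n_v^c$ is a lower bound on $m_v^c$ (via $m_v^c=\df$), not an upper bound, so by itself it cannot certify positive recurrence of the compute queue. Consistently with this, the sentence introducing the proposition in the paper calls it a \emph{necessary} condition for stability, even though the proposition is phrased with ``if''; the proof the paper gives is in fact a necessity argument (``not condition $\Rightarrow$ not stable''), not the sufficiency argument you attempt. Your machinery (product form, Little's law per sub-queue, drift) would be the right toolkit for a genuine sufficiency proof, but you correctly note that it only goes through once one imports the standing hypothesis $\sigma_v^c<1$ from Sect.~\ref{networkmodel}; the inequality $\df\ge n_v^c$ alone does not do that work.
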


\begin{proof}
For  stability,  the  long-term  average  number  of  packets  in the communications queue of $v$,  i.e., $n_v^c$,  should  be upper  bounded  by  the  long-term  average  number  of  packets in the compute queue of $v$,  i.e., $m_v^c$. 
Assume that $\frac{1}{\lambda_v^c}\df\geq \frac{1}{\mu_v^c}$. If this assumption did not hold, we would have $\df< \frac{\lambda_v^c}{\mu_v^c}
<\frac{\lambda_v^c}{\mu_v^c-\gamma_f(\lambda_v^c)}=n_v^c$, and if $\df<n_v^c$, then we have  equivalently the following relation would hold:
\begin{align}
\ccomp(m_v^c)=\frac{\df}{\lambda_v^c}<\ccomm(n_v^c)=\frac{1}{\mu_v^c-\gamma_f(\lambda_v^c)}.\nonumber
\end{align}
In this case, $n_v^c$ will accumulate over time, which will violate stability. Hence, delay  of  computation should  be  higher.
\end{proof}

\begin{prop}\label{rateofcomputationflow1}
{\bf An extension of Little's Law to computing.} The long-term average number $L_v^c$ of packets in node $v$ for class $c$ flow with time complexity $\df$ is bounded as
\begin{align}
L_v^c&\geq {b_v^c}^-\cdot\Big[\frac{\df}{\lambda_v^c}+\frac{1}{\mu_v^c-{b_v^c}^-}\Big],\nonumber\\
L_v^c&\leq {b_v^c}^+\cdot\Big[\frac{\df}{\lambda_v^c}+\frac{1}{\mu_v^c-{b_v^c}^+}\Big],\nonumber
\end{align}
where $(b_v^c)^{\pm}$ satisfies $2(b_v^c)^{\pm}=\lambda_v^c(1+\frac{1}{\df})+\mu_v^c\pm \sqrt{(\lambda_v^c(1+\frac{1}{\df})+\mu_v^c)^2-4\lambda_v^c\mu_v^c}$. 
\end{prop}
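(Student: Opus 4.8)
The plan is to treat the claim as a direct unfolding of the extended Little's Law (\ref{LittleComputation}) once the delay is written out explicitly. First I would insert the compute cost (\ref{cost_compute}) and the communication cost (\ref{cost_comms}) into the additive decomposition (\ref{additive_delay_cost}), turning (\ref{LittleComputation}) into
\begin{align}
L_v^c = \gamma_f(\lambda_v^c)\Big[\frac{\df}{\lambda_v^c} + \frac{1}{\mu_v^c - \gamma_f(\lambda_v^c)}\Big].\nonumber
\end{align}
Writing $g(b) = b\big[\frac{\df}{\lambda_v^c} + \frac{1}{\mu_v^c-b}\big]$, this says $L_v^c = g(\gamma_f(\lambda_v^c))$, and the two quantities bounding $L_v^c$ in the statement are exactly $g((b_v^c)^-)$ and $g((b_v^c)^+)$. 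Hence the whole proposition collapses to sandwiching the processing factor, $(b_v^c)^- \le \gamma_f(\lambda_v^c) \le (b_v^c)^+$, and then transporting this through $g$.

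Next I would record two elementary facts about $g$ and about $(b_v^c)^\pm$. The displayed formula for $2(b_v^c)^\pm$ is the quadratic formula applied to
\begin{align}
Q(b) = b^2 - \Big(\lambda_v^c\big(1+\tfrac{1}{\df}\big)+\mu_v^c\Big)b + \lambda_v^c\mu_v^c,\nonumber
\end{align}
so $(b_v^c)^-$ and $(b_v^c)^+$ are its smaller and larger roots. Clearing denominators gives the identity $g(b) = \df - \df\, Q(b)\big/\big(\lambda_v^c(\mu_v^c-b)\big)$, which I would verify directly; it immediately yields $g((b_v^c)^\pm)=\df$ and shows that, for $b<\mu_v^c$, the sign of $g(b)-\df$ is the sign of $-Q(b)$. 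Differentiating, $g'(b) = \frac{\df}{\lambda_v^c} + \frac{\mu_v^c}{(\mu_v^c-b)^2} > 0$, so $g$ is strictly increasing on $(0,\mu_v^c)$. Evaluating $Q$ at $0,\lambda_v^c,\mu_v^c$ gives $Q(0)=\lambda_v^c\mu_v^c>0$, $Q(\lambda_v^c)=-(\lambda_v^c)^2/\df<0$ and $Q(\mu_v^c)=-\lambda_v^c\mu_v^c/\df<0$, which localizes the roots as $(b_v^c)^-\in(0,\lambda_v^c)$ and $(b_v^c)^+\in(\mu_v^c,\infty)$.

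Transporting this localization through $g$ now gives both inequalities. Since the feasibility chain $\Hgf\le\gamma_f(\lambda_v^c)\le\lambda_v^c<\mu_v^c$ forces $\gamma_f(\lambda_v^c)<\mu_v^c<(b_v^c)^+$, the larger root is never attained, and the substantive content is the lower localization $\gamma_f(\lambda_v^c)\ge(b_v^c)^-$, i.e. $Q(\gamma_f(\lambda_v^c))\le 0$. By the identity, written as $L_v^c-\df = -\df\,Q(\gamma_f(\lambda_v^c))\big/\big(\lambda_v^c(\mu_v^c-\gamma_f(\lambda_v^c))\big)$, this is equivalent to $L_v^c\ge\df$, which I would derive from the stability condition $\df\ge n_v^c$ of Proposition \ref{stability} together with the breakdown $L_v^c=m_v^c+n_v^c$ in (\ref{computationno}). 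It is worth noting that the same identity gives $g((b_v^c)^-)=g((b_v^c)^+)=\df$, so the two displayed bounds coincide and the estimate is tight at $L_v^c=\df$; the two-sided form simply records that the feasible processing factor is squeezed onto the smaller root $(b_v^c)^-$ at the computation threshold.

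The main obstacle I anticipate is precisely this sign determination. The processing factor $\gamma_f(\lambda_v^c)$ is never available in closed form -- it is pinned down only implicitly by the functional-compression floor $\Hgf$ and by Little's Law itself -- so establishing $Q(\gamma_f(\lambda_v^c))\le 0$ is a self-consistency argument rather than a direct computation. Care is needed because $\df\ge n_v^c$ alone does not force $Q(\gamma_f(\lambda_v^c))\le 0$ (for small $\gamma_f$ one has $Q>0$), so I would have to argue that in the regime where computation is actually performed the stability hypothesis, combined with the monotone growth of $\df$ in $m_v^c$, indeed guarantees $L_v^c\ge\df$ and hence places $\gamma_f(\lambda_v^c)$ on the correct side of $(b_v^c)^-$. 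Once that localization is secured, the identity and the monotonicity of $g$ make the two bounds immediate.
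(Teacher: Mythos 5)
Your proposal follows the same skeleton as the paper's own proof. The paper likewise starts from the premise $\df\le L_v^c=\gamma_f(\lambda_v^c)\big[\tfrac{\df}{\lambda_v^c}+\tfrac{1}{\mu_v^c-\gamma_f(\lambda_v^c)}\big]$, rearranges it into exactly your quadratic $Q$, i.e., $\gamma_f(\lambda_v^c)^2-2a_v^c\gamma_f(\lambda_v^c)+\lambda_v^c\mu_v^c\le 0$ with $2a_v^c=\lambda_v^c(1+1/\df)+\mu_v^c$, localizes $\gamma_f(\lambda_v^c)$ in the root interval $[{b_v^c}^-,{b_v^c}^+]$, and then substitutes the endpoints back through (\ref{LittleComputation}), (\ref{cost_comms}), (\ref{cost_compute}). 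So your unfolding of Little's Law, your quadratic, and your transport through $g$ are precisely the paper's steps.

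You go beyond the paper in two respects, both to your credit. First, the premise $L_v^c\ge\df$: the paper simply writes ``$\df\le L_v^c$'' as though it followed from Little's Law and never derives it; you correctly isolate it as the one substantive hypothesis and candidly note that Prop.~\ref{stability} ($\df\ge n_v^c$) does not imply it. The gap you flag is therefore not a defect relative to the paper --- it is an unproven modelling assumption in both treatments, and you are at parity with the published argument on this point. Second, your identity $g(b)-\df=-\df\,Q(b)/\big(\lambda_v^c(\mu_v^c-b)\big)$ exposes something the paper overlooks: since $Q({b_v^c}^{\pm})=0$, both displayed bounds evaluate to exactly $\df$, so the proposition read literally asserts $\df\le L_v^c\le\df$. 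Moreover, your observation $Q(\mu_v^c)<0$ places $\mu_v^c$ strictly between the roots, so the paper's implicit monotone transport of the whole interval through $g$ is invalid across the singularity at $\mu_v^c$: with $\gamma_f(\lambda_v^c)\in[{b_v^c}^-,\mu_v^c)$ the legitimate conclusion is only the lower bound $L_v^c\ge g({b_v^c}^-)=\df$, while the honest upper bound comes from $\gamma_f(\lambda_v^c)\le\lambda_v^c$, namely $L_v^c\le g(\lambda_v^c)=\df+\tfrac{\lambda_v^c}{\mu_v^c-\lambda_v^c}$, not from ${b_v^c}^+$. Your reading of the stated upper bound as a degenerate restatement of tightness is the only consistent interpretation, and identifying it is a genuine sharpening of the paper's argument rather than a flaw in yours.
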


\begin{proof}
Following from Little's law in (\ref{LittleComputation}) we have
\begin{align}
 \df \leq L_v^c= \gamma_f(\lambda_v^c)\Big[\frac{\df}{\lambda_v^c}+\frac{1}{\mu_v^c-\gamma_f(\lambda_v^c)}\Big].\nonumber
\end{align}
Rearranging the above term for $2a_v^c=\lambda_v^c(1+1/\df)+\mu_v^c$ we obtain the following relation: $\gamma_f(\lambda_v^c)^2-2a_v^c\gamma_f(\lambda_v^c)+\lambda_v^c\mu_v^c\leq 0$ that gives the following range for $\gamma_f(\lambda_v^c)$:
\begin{align}
\label{gamma_range}
[a_v^c- \sqrt{(a_v^c)^2-\lambda_v^c\mu_v^c},\,a_v^c+ \sqrt{(a_v^c)^2-\lambda_v^c\mu_v^c}],
\end{align}
from which we observe that $\gamma_f(\lambda_v^c)$ is sublinear, i.e.,  $\gamma_f(\lambda_v^c)=o(\lambda_v^c)$ which is true if $(a_v^c)^2> \lambda_v^c\mu_v^c$, and is linear, i.e.,  $\gamma_f(\lambda_v^c)=O(\lambda_v^c)$ which is true if $(a_v^c)^2\approx\lambda_v^c\mu_v^c$. 
This is also intuitive from the surjection factors (Cases I-II) of Example \ref{ex_surjection_factor_multiflow}. 
From (\ref{LittleComputation}), (\ref{cost_comms}), and (\ref{cost_compute}), we get the desired result. 
Furthermore, if $\gamma_f(\lambda_v^c)\approx a_v^c$, the range provided in (\ref{gamma_range}) is tight and $L_v^c\approx  \sqrt{\frac{\mu_v^c}{\lambda_v^c}}\df+\frac{1}{\sqrt{\mu_v^c/\lambda_v^c}-1}\geq \sqrt{\frac{\lambda_v^c}{\mu_v^c}}+\frac{1}{\sqrt{\mu_v^c/\lambda_v^c}-1}$. 
Hence, the best achievable scaling is $L_v^c=O(\sqrt{\df})$.
\end{proof}

\begin{remark}
From Prop. \ref{rateofcomputationflow1}, we observe that as $\df$ increases, $a_v^c$ decreases and $\gamma_f(\lambda_v^c)$ concentrates. Ignoring this principle, if the processed flow rate were increased with the time complexity of the function, then the cost for both computation and communications would increase together. 
However, the processing factor can decrease with the time complexity, and the output rate may not be compressed below $\Hgf$. Hence,  
$\Hgf\leq \gamma_f(\lambda_v^c)\leq \Hg(\X)$ should be satisfied, where the upper limit is due to the identity function.
\end{remark}

We denote by $M_v^c$ the long-term average number of packets belonging to class $c\in C$ in $v\in V'$ waiting for communications service in case of no computation, i.e., $M_v^c=\frac{\lambda_v^c}{\mu_v^c-\lambda_v^c}$. This is an upper bound to $L_v^c$ since intermediate computations reduce the long term average number of packets in the system. The following gives a characterization of $L_v^c$.
\begin{prop}\label{FlowResult} 
The long-term average number of packets $L_v^c$ satisfies
\begin{align}
\label{FlowBounds}
\frac{\Hgf}{\mu_v^c-\Hgf}\leq L_v^c\leq M_v^c,\quad v\in V,\quad c\in C.
\end{align}
\end{prop}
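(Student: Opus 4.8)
The plan is to prove the two bounds separately, relying only on the M/M/1 relations for the communications queue recorded below (\ref{cost_comms}), the fundamental compression limit $\Hgf\leq \gamma_f(\lambda_v^c)\leq \lambda_v^c<\mu_v^c$, and the per-packet delay comparison that underlies Proposition \ref{LoadThreshold}. I focus on interior nodes $v\in V'$, since the virtual source and sink are degenerate under the cost definitions in (\ref{cost_comms})--(\ref{cost_compute}).

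For the left-hand (lower) bound I would start from the decomposition $L_v^c=m_v^c+n_v^c$ with $m_v^c\geq 0$, so that $L_v^c\geq n_v^c$. Viewing the communications queue in isolation as M/M/1 with arrival rate $\gamma_f(\lambda_v^c)$ and service rate $\mu_v^c$ gives $n_v^c=\gamma_f(\lambda_v^c)/(\mu_v^c-\gamma_f(\lambda_v^c))$, exactly as stated below (\ref{cost_comms}). Since the map $x\mapsto x/(\mu_v^c-x)$ is strictly increasing on $[0,\mu_v^c)$ and the compression limit forces $\gamma_f(\lambda_v^c)\geq \Hgf$, I obtain $n_v^c\geq \Hgf/(\mu_v^c-\Hgf)$, with positive denominator because $\Hgf\leq\lambda_v^c<\mu_v^c$. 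Chaining $L_v^c\geq n_v^c\geq \Hgf/(\mu_v^c-\Hgf)$ delivers the bound.

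For the right-hand (upper) bound I would compare the computing regime against the no-computation (identity) regime via Little's Law (\ref{LittleComputation}). Setting $\gamma_f(\lambda_v^c)=\lambda_v^c$ yields $m_v^c=0$ and $W_v^c=1/(\mu_v^c-\lambda_v^c)$, so that $L_v^c$ collapses to exactly $M_v^c=\lambda_v^c/(\mu_v^c-\lambda_v^c)$. When the node actually computes, the decision rule behind Proposition \ref{LoadThreshold} guarantees that it does so only when the per-packet delay strictly decreases, i.e. $W_v^c=\ccomp+\ccomm<1/(\mu_v^c-\lambda_v^c)$. Combining this with $\gamma_f(\lambda_v^c)\leq\lambda_v^c$ in (\ref{LittleComputation}) gives $L_v^c=\gamma_f(\lambda_v^c)\,W_v^c\leq\lambda_v^c\,W_v^c<\lambda_v^c/(\mu_v^c-\lambda_v^c)=M_v^c$.

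The main obstacle is the upper bound: it is not a mechanical consequence of $\gamma_f(\lambda_v^c)\leq\lambda_v^c$ alone, because the compute cost $\ccomp\approx\df/\lambda_v^c$ can be arbitrarily large (for instance exponential for Classification) and could in principle inflate $W_v^c$ enough to push $\gamma_f(\lambda_v^c)\,W_v^c$ above $M_v^c$. The resolution is that the claim $L_v^c\leq M_v^c$ must be read as conditional on computation being beneficial, which is precisely the load-threshold condition $\rho_v^c<\rho_{th}$ of Proposition \ref{LoadThreshold}; that condition rules out the delay-inflating regime and makes rigorous the statement that intermediate computations reduce the long-term average number of packets.
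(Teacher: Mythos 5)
Your proof is correct, but both halves diverge from the paper's argument in instructive ways. For the lower bound the paper does not argue via $L_v^c\geq n_v^c$; it invokes Little's Law (\ref{LittleComputation}) together with Prop.~\ref{stability} (the stability condition $\ccomp(m_v^c)\geq\ccomm(n_v^c)$) to obtain the stronger intermediate estimate $L_v^c\geq 2\gamma_f(\lambda_v^c)/(\mu_v^c-\gamma_f(\lambda_v^c))$, and only then relaxes to $\Hgf/(\mu_v^c-\Hgf)$; that factor of $2$ is what lets the paper additionally extract $\gamma_f(\lambda_v^c)\geq \mu_v^c\Hgf/(2\mu_v^c-\Hgf)$ at the end of its proof. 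Your monotonicity argument from $L_v^c\geq n_v^c=\gamma_f(\lambda_v^c)/(\mu_v^c-\gamma_f(\lambda_v^c))$ is more elementary and needs neither Little's Law nor Prop.~\ref{stability}, at the price of not yielding that sharper by-product. For the upper bound the paper simply observes that no computation gives $m_v^c=0$ and $n_v^c=L_v^c=M_v^c$, having asserted before the proposition that intermediate computation only reduces the packet count; your version makes that assertion rigorous through the decision rule behind Prop.~\ref{LoadThreshold}, which is faithful to the paper's intent. One remark on your ``main obstacle'': the conditionality you introduce is avoidable, because the model's own relations force the bound unconditionally --- combining $n_v^c=L_v^c\,\gamma_f(\lambda_v^c)/\lambda_v^c$ from (\ref{computationno}) with the queue formula $n_v^c=\gamma_f(\lambda_v^c)/(\mu_v^c-\gamma_f(\lambda_v^c))$ recorded below (\ref{cost_comms}) yields the identity $L_v^c=\lambda_v^c/(\mu_v^c-\gamma_f(\lambda_v^c))$, from which $\gamma_f(\lambda_v^c)\leq\lambda_v^c$ gives $L_v^c\leq M_v^c$ mechanically (and $\gamma_f(\lambda_v^c)\geq\Hgf$ recovers the lower bound too). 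In other words, within the model the compute cost cannot inflate $W_v^c$ freely, since (\ref{computationno}) ties $m_v^c$ to $L_v^c$; the load-threshold hypothesis you append is therefore sufficient but not necessary, and invoking it slightly weakens the scope of the proposition relative to the paper's unconditional statement.
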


\begin{proof}
In the case of no computation, $m_v^c=0$ and $n_v^c=L_v^c$, and the long-term average number of packets in $v$ satisfies $L_v^c=M_v^c$. This gives the upper bound.

When node $v$ computes it is true that $m_v^c>0$, and from (\ref{computationno}) $$L_v^c=\frac{\lambda_v^c}{\gamma_f(\lambda_v^c)}n_v^c\geq n_v^c=\frac{\gamma_f(\lambda_v^c)}{\mu_v^c-\gamma_f(\lambda_v^c)}.$$ We next provide an upper bound to the above equality where the bound can be approximated as function of $\Gamma_c$ as follows: 
\begin{align}
\label{Ubound}
L_v^c \leq \frac{\lambda_v^c}{\gamma_f(\lambda_v^c)} \frac{\gamma_f(\lambda_v^c)}{\lambda_v^c-\gamma_f(\lambda_v^c)}
\approx\frac{1}{1-\Gamma_c}=\frac{1}{1-\frac{\Hgf}{H(\X)}},
\end{align}
where $n_v^c\to 0$ as $\Gamma_c\to 0$, i.e., for deterministic functions where the gain of computing is the highest. On the other hand, as $\Gamma_c\to 1$, $n_v^c \to M_v^c$, i.e., the no computation limit.

The lower bound follows from Little's Law in (\ref{LittleComputation}):
\begin{align}
\label{Lbound}
L_v^c\overset{(a)}{\geq} \gamma_f(\lambda_v^c)\frac{2}{\mu_v^c-\gamma_f(\lambda_v^c)}\overset{(b)}{\geq}  \frac{\Hgf}{\mu_v^c-\Hgf},
\end{align}
where $(a)$ is because $\frac{1}{\lambda_v^c}\df$ is monotonically increasing with $m_v^c$ and not less than 
$\frac{1}{\mu_v^c-\gamma_f(\lambda_v^c)}$ (see Prop. \ref{stability}), and 
$(b)$ is due to the minimal rate required for recovering $f_c(\X)$. Manipulating the lower bound, we get 
$\gamma_f(\lambda_v^c)\geq 
\frac{\mu_v^c \Hgf}{2\mu_v^c-\Hgf}$. 
\end{proof}

Prop. \ref{FlowResult} yields a better IB than that of Slepian and Wolf \cite{SlepWolf1973} as the lower bound in (\ref{FlowBounds}) is no more than $\Hgf$ which we expect due to intermediate compression for computing.

\subsection{Optimizing Average Task Completion Time}
\label{mindelaynetwork}	

This section focuses on generalization to the case where we assume multiple classes of functions and allow conversion among classes, via exploiting the information-theoretic limits of function computation and flow conservation principles.

{\bf \em Processing factor vs arrival rate.}
The processing factor $\gamma_f(\lambda_v^c)$ satisfies $\forall c\in  C,\,\, v\in V\backslash\{s\}$ 
\begin{align}
\label{flow_entropy_relation_expanded}
\gamma_f(\lambda_v^c)\geq \lambda_v^c \Gamma_c,
\end{align} 
which follows from (\ref{flow_entropy_relation}), where $\lambda_v^c=\beta_v^c + \sum\nolimits_{w\in V'}\sum\nolimits_{c'\in C}\gamma_f(\lambda_{w}^{c'}) p_{w,v}^{rou}(c',c)$ using (\ref{throughputofclassc}), where $p_{v}^{rou}(c)=[p_{v,w}^{rou}(c,c')]_{w\in V',\,c'\in C}\in [0,1]^{(|V|-1)\times |C|}$ in (\ref{Proutingmulticlass}) and $\beta_v^c$ are known a priori. The relation in (\ref{flow_entropy_relation_expanded}) ensures that the flows routed from other nodes through the incoming edge of $v$ is at least $\lambda_v^c \Gamma_c$. Letting $\Prouc=[p_{v}^{rou}(c)]_{v\in V'}\in [0,1]^{(|V|-2)\times (|V|-1)\times |C|}$, ${\bm \beta}^c=[\beta_v^c]_{v\in V\backslash\{s\}}\in\mathbb{R}^{(|V|-1)\times 1}$, ${\bm \lambda}^c=[\lambda_v^c]_{v\in V\backslash\{s\}}\in\mathbb{R}^{(|V|-1)\times 1}$, ${\bm \mu}^c=[\mu_v^c]_{v\in V\backslash\{s\}}\in\mathbb{R}^{(|V|-1)\times 1}$, and ${\bm \gamma}_{f}({\bm \lambda}^c)=[\gamma_f(\lambda_v^c)]_{v\in V\backslash\{s\}}\in\mathbb{R}^{(|V-1|)\times 1}$, we can rewrite (\ref{flow_entropy_relation_expanded}) in vector form:
\begin{align}
\label{flow_entropy_relation_vector}
{\bm \lambda}^c\geq\gammaf &\geq {\bm \lambda}^c \Gamma_c=\left[{\bm \beta}^c+\Prouctilde \odot \gammaf\right]\Gamma_c\\
&=(I-\mbox{diag}(\Prouctilde)\Gamma_c)^{-1}{\bm \beta}^c \Gamma_c, \quad \forall c\in  C,\nonumber
\end{align}
where $\Prouctilde=\bm{1}_{|V|-2}^{\intercal}\Prouc \bm{1}_{|C|}\in\mathbb{R}^{(|V|-1)\times 1}$, $\bm{1}_{n}$ is a unit column vector of size $n$, $\odot$ is the 
Hadamard product, $I$ is an $(|V|-1)\times (|V|-1)$ identity matrix, and $\mbox{diag}(\Prouctilde)$ is an $(|V|-1)\times (|V|-1)$ matrix obtained by diagonalizing $\Prouctilde$. The range of ${\bm \lambda}^c$ can be computed as function of $\Gamma_c$ which is determined by the function's surjectivity and its computation cost. 
From (\ref{flow_entropy_relation_vector}) ${\bm \lambda}^c\geq {\bm \beta}^c+\mbox{diag}(\Prouctilde)  {\bm \lambda}^c\Gamma_c$. Furthermore, from (\ref{throughputofclassc}) ${\bm \lambda}^c\leq{\bm \beta}^c+\mbox{diag}(\Prouctilde)  {\bm \lambda}^c$. Combining these two yields the following lower and upper bounds, respectively: $$ (I-\mbox{diag}(\Prouctilde)\Gamma_c)^{-1} {\bm \beta}^c \leq {\bm \lambda}^c\leq(I-\mbox{diag}(\Prouctilde))^{-1}{\bm \beta}^c.$$ 

We next contrast 2 centralized formulations that optimize the {\em cost function} aggregated over $c\in C$, $v\in V$ in order to understand the savings of computation in a network setting.

\begin{figure}[t!]
\centering
\includegraphics[width=0.5\textwidth]{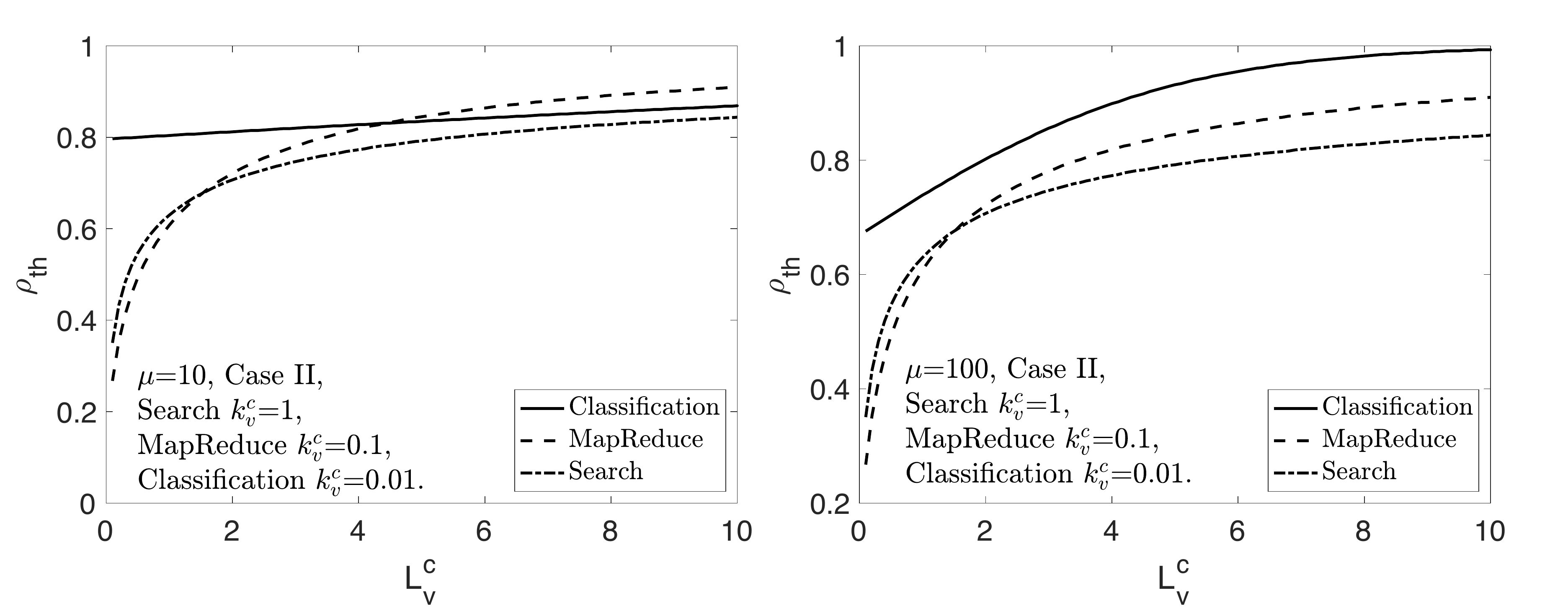}
\caption{The critical threshold $\rho_{th}$ for computation versus $L_v^c$ that satisfies Little's Law for computation $L_v^c=\gamma_f(\lambda_v^c) \cdot W_{v}^c$ in (\ref{LittleComputation}).} 
\label{thresholdisolated}
\end{figure}

\begin{figure*}[t!]
\centering
\includegraphics[width=\textwidth]{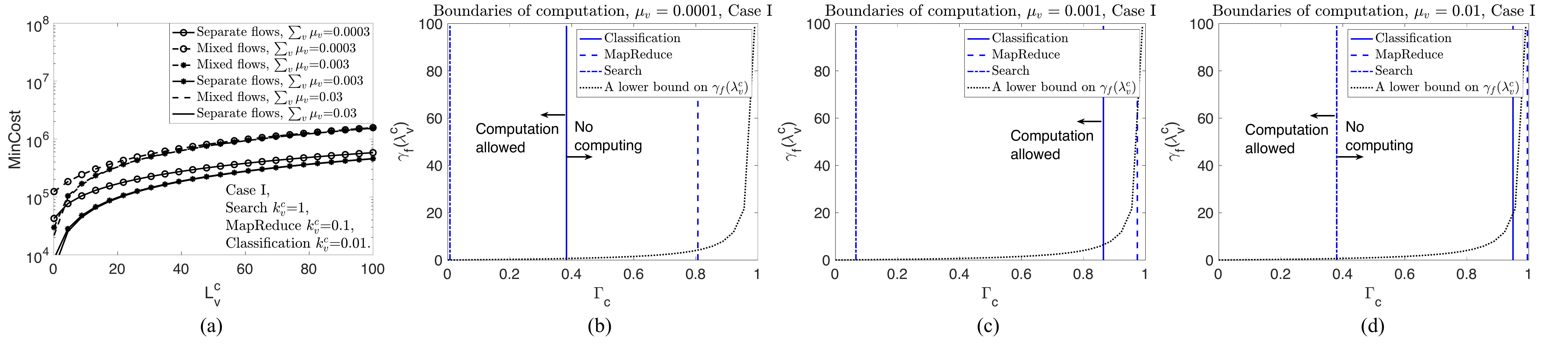}
\caption{Case I. Sublinear  surjection. 
(a) ${\rm MinCost}$ vs $L_v^c$. A lower bound on $\gamma_f(\lambda_v^c)$ for (b) $\mu_v^c=0.0001$, (c) $\mu_v^c=0.001$, (d) $\mu_v^c=0.01$. The boundaries of computation for each function class is indicated by vertical lines where computing 
is only allowed for $\Gamma_c$ below the 
line. } 
\label{CaseI_images}
\end{figure*}

{\bf \em Average cost of no computing.} 
We first formulate an optimization problem that relies on communicating the entire source data with no intermediate computing. This boils down to minimizing the communications cost 
and is formulated as 
\begin{equation}
\label{costoptimization_comms}
\begin{aligned}
{\rm CommsCost:} & \underset{\rho<1}{\min}
& & \sum\limits_{v\in V} \sum\limits_{c\in C} \ccomm(n_v^c),	
\end{aligned}
\end{equation}
where $\ccomm(n_s^c)$ and $\ccomm(n_0^c)$ are given in (\ref{cost_comms}), and $\ccomm(n_v^c)=\frac{1}{\mu_v^c-\lambda_v^c}$ for $v\in V'$. The solution to (\ref{costoptimization_comms}) is $${\rm CommsCost}=\sum\limits_{c\in C}\frac{1}{\bm{1}_{|V|-1}^{\intercal}({\bm \mu}^c-(I-\mbox{diag}(\Prouctilde))^{-1} {\bm \beta}^c)}.$$ 

{\bf \em Average cost of function computation.} 
We formulate a utility-based optimization problem by using the time complexities of computation tasks in (\ref{time_complexity_special_functions}) and decoupling the costs of communications and computation:
\begin{equation}
\label{costoptimization}
\begin{aligned}
{\rm MinCost:} & \underset{\rho,\, \sigma<1}{\min}
& & \sum\limits_{v\in V} \sum\limits_{c\in C} W_{v}^c\\	
& \hspace{0.3cm}\text{s.t.}
& & {\bm \mu}^c > {\bm \lambda}^c \geq \gammaf,\\ 
& & & \gammaf \geq {\bm \lambda}^c\Gamma_c,\quad \forall c\in  C,
\end{aligned}
\end{equation}		
where $W_v^c=\ccomp(m_v^c)+\ccomm(n_v^c)$ captures the total delay, and $\ccomp(m_v^c)$ and $\ccomm(n_v^c)$ were defined in (\ref{cost_compute}) and (\ref{cost_comms}). 
For this framework, in addition to the stability constraints given, we have to satisfy flow conservation, termination, generation, capacity and non-negativity constraints.
the relations between $\lambda_v^c$ and $\gamma_f(\lambda_v)$ are given in (\ref{throughputofclassc}).

{\bf \em Solution to ${\rm MinCost}$.} 
Given $\ccomm(n_v^c)$, $\ccomp(m_v^c)$, and $\df$ of different classes, we can solve for the optimal values of $\gamma_f(\lambda_v^c)$ and $\lambda_{v}^{c}$, $v\in V$, $c\in C$ that minimize ${\rm MinCost}$ in (\ref{costoptimization}). Then, using 
(\ref{flow_entropy_relation}), and by mapping the surjectivities $\Gamma_c$ to the class of functions, we can infer the type of flows (i.e., functions) that we can compute effectively. 
Even though ${\rm MinCost}$ may be non-convex and a global optimal solution may not exist or it might be NP-hard in some instances, to demonstrate the achievable gains in ${\rm MinCost}$ via computation over a communications only scheme we will run some experiments in Sect. \ref{performance} both by (i) separating and (ii) mixing different classes of flows (Search, MapReduce, and Classification) without precisely optimizing (\ref{costoptimization}).

{\bf \em Modeling classification via convex flow.}  	
Classification function $\ccomp(m_v^c)$ with a time complexity given in (\ref{time_complexity_special_functions}) is convex in $m_v^c$ which is linear and decreasing in $\gamma_f(\lambda_v^c)$, and $\ccomm(n_v^c)$ is convex in $\gamma_f(\lambda_v^c)$. Thus, ${\rm MinCost}$ is convex in $\gamma_f(\lambda_v^c)$. 
Hence, the optimal solution satisfies $\frac{\partial W}{\partial \gamma_f(\lambda_v^c)}=0$, $\forall v\in V$, $\forall c\in  C$. 
In the special case when $\ccomm$ dominates $\ccomp$, equality is required in the lower bound of in (\ref{flow_entropy_relation_vector}) such that $\gammaf = {\bm \lambda}^c\Gamma_c$ to minimize the convex ${\rm MinCost}$. Hence, $\lambda_{v}^{c}$'s can be extracted by solving the set of $|V|\times |C|$ equalities: $\gamma_f(\lambda_v^c)=\Big[\beta_v^c + \sum\nolimits_{w\in V'}\sum\nolimits_{c'\in C} \gamma_f(\lambda_w^{c'}) p_{w,v}^{rou}(c',c)\Big]\Gamma_c$, which yields $\gammaf= (I-\Prouctilde\Gamma_c)^{-1}{\bm \beta}^c \Gamma_c$, $c\in  C$.




{\bf \em Local solution of ${\rm MinCost}$.}
To find the local minima of ${\rm MinCost}$ for general computation cost functions, we can use the Karush-Kuhn-Tucker (KKT) approach in nonlinear programming \cite{Boyd2009}, provided that some regularity conditions are satisfied \cite{Bertsekas1999}. Allowing inequality constraints, KKT conditions determine the local optimal solution.
However, due to the lack of strong duality results for non-convex problems, ${\rm MinCost}$ can indeed be NP-hard in some instances \cite{manyem2010duality}.

Next in Sect. \ref{performance}, we will numerically evaluate the load threshold $\rho_{th}$ in (\ref{condition_computation}) for function classes with time complexities in (\ref{time_complexity_special_functions}), and the local solution of ${\rm MinCost}$ in (\ref{costoptimization}).

\section{Numerical Evaluation of Performance}
\label{performance}

Exploiting the computation network model, the cost model, and the flow analysis detailed in Sects. \ref{networkmodel}, \ref{costbreakdown}, \ref{flowanalysis}, our goal in this section is to explore how to distribute computation in the light of the fundamental limits provided in Sect. \ref{compresstocompute}.

{\bf \em Traffic intensity threshold $\rho_{th}$ for computing.} Leveraging Little's Law (\ref{LittleComputation}) for computation, we numerically evaluate the critical traffic intensity thresholds $\rho_{th}$ in (\ref{condition_computation}) of Prop. \ref{LoadThreshold} for the set of functions with time complexities $\df$ in (\ref{time_complexity_special_functions}). We contrast and illustrate their thresholds $\rho_{th}$ in Fig. \ref{thresholdisolated} for $\mu=10,\, 100$ for different cost models such that $\ccomp(m_v^c)=\frac{k_v^c}{\lambda_v^c} \df$ where $k_v^c>0$ is a constant proxy for modeling the cost as indicated in the figure. The threshold $\rho_{th}$ gives a range $\rho<\rho_{th}$ where computation is allowed and increases with $L_v^c$. 
Our experiment shows that $\rho_{th}$ is higher for Classification, implying that node can only compute if the flow is sufficient, and grows slower for low complexity Search and MapReduce functions, i.e., node computes even for small flow rates. This is because in the regime $\rho<\rho_{th}$, the growth of $\df$ is much faster than the communication cost of $\gamma_f(\lambda_v^c)$ for high complexity functions, rendering $L_v^c$ sufficiently small to ensure a valid threshold for computing. 
The benefit of computation is higher for MapReduce and even higher for Search. This implies that 
most of the resources are reserved for running simpler tasks to enable cost effective distributed computation.

{\bf \em Processing factor $\gamma_f(\lambda_v^c)$ vs entropic surjectivity $\Gamma_c$.} We next consider the network setting with mixing of flows. 
We numerically solve ${\rm MinCost}$ in (\ref{costoptimization}) for some special functions. We assume that $P^{rou}(c)$ is a valid stochastic matrix with entries chosen uniformly at random, and that ${\bm \beta}^c$ and ${\bm \mu}^c$ are known. Note that the values of $\Gamma_c$, $\lambda_v^c$, and hence $\gamma_f(\lambda_v^c)$ are coupled. It is intuitive that $\ccomp(m_v^c)$ increases with $\df$ for any function class, and the rate of increase is determined by $\lambda_v^c$. As $\Gamma_c$ increases, the nodes generate a higher $\gamma_f(\lambda_v^c)$ and $\ccomm(n_v^c)$ becomes higher. However, the exact behavior of ${\rm MinCost}$ is determined by the complex relationships between the flows. We investigate the behaviors of $\gamma_f(\lambda_v^c)$ and ${\rm MinCost}$ versus $\Gamma_c$ and how they change for different $\df$. 
Note that $\gamma_f(\lambda_v^c)$ increases in $\Gamma_c$ from (\ref{flow_entropy_relation}) because compression becomes harder as the entropy of the function $\Hgf$ grows. 
However, the connection between $\gamma_f(\lambda_v^c)$ and the time complexity $\df$ (and hence ${\rm MinCost}$) is not immediate. For example, while addition function has low complexity $\df$ and high entropy $\Hgf$, multiplication function has high complexity and low entropy.

\begin{figure*}[h!]
\centering
\includegraphics[width=\textwidth]{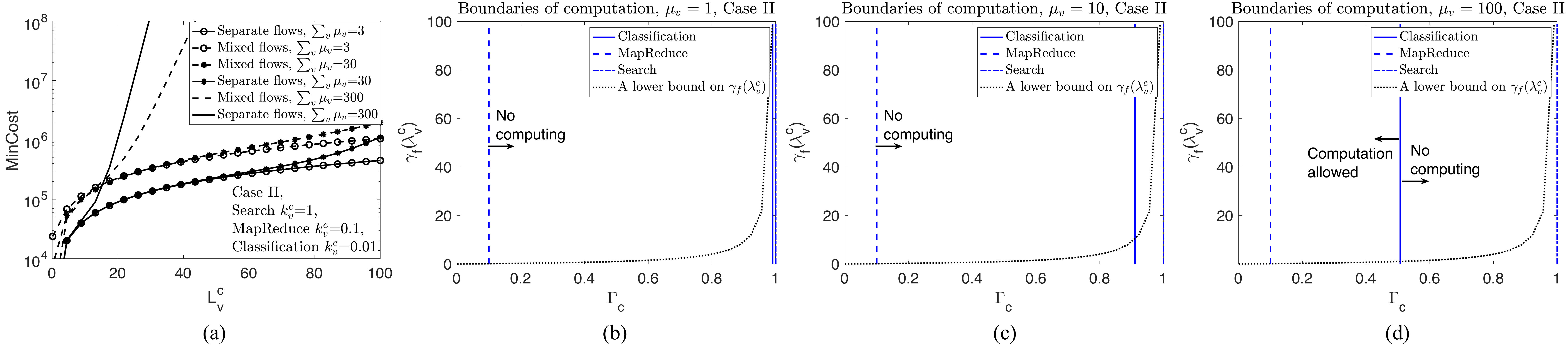}
\caption{Case II. Linear  surjection. 
(a) ${\rm MinCost}$ vs $L_v^c$. $\gamma_f(\lambda_v^c)$ for (b) $\mu_v^c=1$, (c) $\mu_v^c=10$, (d) $\mu_v^c=100$.} 
\label{CaseII_images}
\end{figure*}

We compare ${\rm MinCost}$ of (i) separating and (ii) mixing flows for Cases I and II, in Figs. \ref{CaseI_images} and \ref{CaseII_images}, respectively, with varying $\mu_v^c$. 
As shown in Fig. \ref{CaseI_images}-(a), as $\mu_v^c$ (processing power) increases ${\rm MinCost}$ decays in Case I. 
For Case I, ${\rm MinCost}$ decreases with $\mu_v^c$. 
The performance of (i)-(ii) similar when $\mu_v^c$ is very low and mixing flows performs worse. Because $\gamma_f(\lambda_v^c)$ sublinearly scales, $\ccomm(n_v^c)$ is decreases (convex), and $\ccomp(m_v^c)$ should decrease in $\mu_v^c$ because with increasing processing power $\lambda_v^c$ also increases and $\df$ decays because of Little's law when $v$ is considered in isolation $L_v^c$ decays when $W_v^c$ decays. 
From Fig. \ref{CaseII_images}-(a) for Case II ${\rm MinCost}$ increases in $\mu_v^c$. 
Because $\gamma_f(\lambda_v^c)$ linearly scales, $\ccomm$ is fixed, and $\ccomp$ increases in $\mu_v^c$ (due to $\ccomp(m_v^c)\geq \ccomm(n_v^c)$ as both $n_v^c$ and $m_v^c$ increase in $\mu_v^c$).  
%
{When $\mu_v^c$ is high $\ccomp(m_v^c)$ grows as $m_v^c$ increases. Therefore, if the flows are mixed among $V$ nodes, while it is true that the processed flow per node is higher versus when the flows are separated, the processing resources will be used more effectively instead of being utilized partially only for the assigned tasks. The sublinear surjection factor will help reduce $\ccomm(n_v^c)$. This will ensure at high $\mu_v^c$ that node $v$ can serve a higher $L_v^c$ without having to sacrifice ${\rm MinCost}$.} 
The savings of (ii) in ${\rm MinCost}$ are more apparent at high $L_v^c$ which maps to supporting a high intensity heterogeneous computing scenario via mixing rather than separate processing. 
Figs. \ref{CaseI_images} and \ref{CaseII_images}, (b)-(d) show the behavior of a lower bound on $\gamma_f(\lambda_v^c)$ in (\ref{flow_entropy_relation_vector}) as function of $\Gamma_c$, and indicate that the boundary is shifted towards right as $\mu_v^c$ increases, i.e., as the node's processing capability increases, a higher rate of surjectivity $\Gamma_c$ is allowed. As the communications delay becomes negligible, intermediate computations can improve ${\rm MinCost}$.
%

The behavior of ${\rm MinCost}$ is modified by the type and scaling of the computation cost as well as $\Gamma_c$. 
${\rm CommsCost}$ in (\ref{costoptimization_comms}) gives an upper bound to ${\rm MinCost}$. As the computation becomes more costly, the sources may only be partially compressed in order to optimize ${\rm MinCost}$. 
We also infer that there is no computing beyond some $\Gamma_c$ because allocating resources to computation no longer incurs less cost than ${\rm CommsCost}$. 

A node can perform computation and forward (route) the processed data if the range of $\Gamma_c$ that allows compression is flexible. This is possible when computation is cheap. However, if a node's compression range is small, i.e., when computation is very expensive, then the node only relays most of the time. While computing at the source and communicating the computation outcome might be feasible for some function classes, it might be very costly for some sets of functions due to the lack of cooperation among multiple sources. 
By making use of redundancy of data across geographically dispersed sources and the function to be computed, it is possible to decide how to distribute the computation in the network.

\section{Conclusions}
\label{conclusion}

We provided a novel perspective to function computation problem in networks. We introduced the novel notion of entropic surjectivity to assess the functional complexity. Extending Little's Law to computing, we derived regimes for which intermediate computations can provide significant savings. Our approach can be considered as an initial step towards understanding how to distribute computation and balance functional load in networks. 
Future directions include devising coding techniques for in network functional compression, by using compressed sensing and the compression theorem of Slepian and Wolf, employing the concepts of graph entropy, and exploiting function surjectivity. They also include more general network models beyond stationary and product-form.

\bibliographystyle{IEEEtran}
\bibliography{Derya}

\end{document}